  \theoremstyle{definition}
  \newtheorem{defn}{Definition}
  \theoremstyle{plain}
  \newtheorem{lem}{Lemma}
\theoremstyle{plain}
\newtheorem{thm}{Theorem}
  \theoremstyle{plain}
  \newtheorem{cor}{Corollary}
  \theoremstyle{remark}
  \newtheorem{rem}{Remark}
 \theoremstyle{definition}
  \newtheorem{example}{Example}
  \theoremstyle{plain}
  \newtheorem*{conjecture*}{Conjecture}
\begin{document}

\title{Towards the Capacity Region of Multiplicative Linear Operator Broadcast
Channels }

\author{\IEEEauthorblockN{Yimin Pang}\IEEEauthorblockA{Department of
Information Science\\
 and Electronic Engineering\\
 Zhejiang University\\
 Hangzhou, China, 310027\\
 Email: yimin.pang@zju.edu.cn } \and\IEEEauthorblockN{Thomas
Honold\IEEEauthorrefmark{1}} \IEEEauthorblockA{Department of
Information Science\\
 and Electronic Engineering\\
 Zhejiang University\\
 Hangzhou, China, 310027\\
 Email: honold@zju.edu.cn} %
\thanks{Supported by the National Natural Science Foundation of China under
Grant No. 60872063%
}}
\maketitle
\begin{abstract}
Recent research indicates that packet transmission employing random
linear network coding can be regarded as transmitting subspaces over
a linear operator channel (LOC). In this paper we propose the framework
of linear operator broadcast channels (LOBCs) to model packet broadcasting
over LOCs, and we do initial work on the capacity region of constant-dimension
multiplicative LOBCs (CMLOBCs), a generalization of broadcast erasure
channels. Two fundamental problems regarding CMLOBCs are addressed--finding
necessary and sufficient conditions for degradation and deciding whether
time sharing suffices to achieve the boundary of the capacity region
in the degraded case.\end{abstract}
\begin{IEEEkeywords}
linear operator channel, network coding, broadcast channel, capacity
region, superposition coding, subspace codes
\end{IEEEkeywords}

\section{Introduction}

Random linear network coding \cite{1705002} is an efficient alternative
to achieve the network capacity proposed in \cite{850663}. In a random
linear network coding channel packets are transmitted in generations
and are regarded as $n$-dimensional row vectors over some finite
field $\mathbb{F}_{q}$. Due to the subspace preserving property,
packet transmission over an acyclic noisy network may be thought of
as conveying subspaces over a \emph{linear operator channel} (LOC)
\cite{4567581}, whose input and output symbols are taken from the
set of all subspaces of $\mathbb{F}_{q}^{m}$ (referred to as {}``ambient
space''). In \cite{5429134} Silva et al.\ investigated the capacity
of a random linear network coding channel with matrices as input/output
symbols. Later, by regarding a LOC as a particular DMC, Uch{ô}a-Filho
and N{ó}brega \cite{uchoa2010capacity} studied the capacity of
constant dimension multiplicative LOCs. Yang et al. \cite{yang2010optimality,yang2010linear}
considered general non-constant multiplicative LOC capacity. In \cite{jafari2009capacity}
the rate region of multiple source access LOCs was investigated.

We will denote the set of all $i$-dimensional subspaces of $\mathbb{F}_{q}^{m}$
by $\mathcal{P}(\mathbb{F}_{q}^{m},i)$. The following notation will
be used in the sequel. Symbols $\mathsf{X}$, $\mathsf{Y}$ and $\mathsf{U}$
denote random variables with values from subspace alphabets $\mathfrak{X}$,
$\mathfrak{Y}$, respectively $\mathfrak{U}$. The symbols $X$, $Y$
and $U$ denote subspaces in $\mathfrak{X}$, $\mathfrak{Y}$ and
$\mathfrak{U}$, respectively.

\textit{Constant-dimension multiplicative LOCs (CMLOCs)} deserve our
interest, since they capture most packet transmission scenarios. A
precise definition of CMLOCs from the information theory point-of-view
is the following. 
\begin{defn}
\label{dfn:CMLOC} A \emph{constant-dimension multiplicative LOC}
(CMLOC) of constant dimension $l$ is a discrete memoryless channel
(DMC) with input alphabet $\mathfrak{X}=\mathcal{P}(\mathbb{F}_{q}^{m},l)$
, output alphabet $\mathfrak{Y}=\bigcup_{i=0}^{l}\mathcal{P}(\mathbb{F}_{q}^{m},i)$
and transfer probabilities $p(Y|X)=p_{\mathsf{Y}|\mathsf{X}}(Y|X)=p(\mathsf{Y}=Y|\mathsf{X}=X)$
($X\in\mathfrak{X}$, $Y\in\mathfrak{Y}$) satisfying \begin{equation}
p(Y|X)=\begin{cases}
\frac{\epsilon_{\text{dim}(Y)}}{\binom{l}{\text{dim}(Y)}_{q}} & \text{if}\,\, Y\subseteq X,\\
0 & \text{otherwise}.\end{cases}\label{eq:LOCTransition}\end{equation}
 Here $\epsilon_{i}$, $0\leq i\leq l$, denotes the probability of
receiving an $i$-dimensional subspace, and $\binom{l}{i}_{q}$ is
the familiar Gaussian binomial coefficient. 
\end{defn}
Our definition of a CMLOC is slightly different from that in \cite{uchoa2010capacity},
where instead of $\boldsymbol{\mathbf{\epsilon}}$ the rank deficiency
distribution $p_{\rho}(i)_{0\leq i\leq l}$ (related to our distribution
by $p_{\rho}(i)=\epsilon_{l-i}$) occurs. In our case the total erasure
probability is $\epsilon_{0}+\epsilon_{1}+\dots+\epsilon_{l-1}=1-\epsilon_{l}$,
and $\epsilon_{l}$ is the probability of error-free transmission.
The capacity of a CMLOC is given in \cite[Th.~4]{uchoa2010capacity}.

As we know, only packet multicasting benefits from network coding
and on the other hand multicasting at a constant rate would either
starve receivers with high band-width or overwhelm those with a poor
connection. This provides our motivation to investigate broadcasting
over LOCs.

Basic knowledge on broadcast channels can be found in \cite{cover1991elements,1055652,720547}.
Recent work showed that the computation of the capacity region of
a discrete memoryless degraded broadcast channel is a non-convex DC
problem \cite{4595282}. Later Yasui et al.\cite{5513525} applied
the Arimoto-Blahut algorithm \cite{arimoto72,blahut72} for numerically
computing the channel capacity.

The framework of general Linear Operator Broadcast Channels (LOBCs)
is presented in Section II with emphasis on constant-dimension multiplicative
LOBCs (CMLOBCs), a generalization of the well-known binary erasure
broadcast channel (BEBC). Two fundamental questions about CMLOBCs
are addressed: First, when will a CMLOBC be stochastically degraded?
While for BEBCs the solution is quite obvious, for CMLOBCs the rich
structure of possible erasures makes the problem quite intriguing.
Our solution is discussed in Section III. Second, in the case of a
degraded CMLOBC is time sharing sufficient to exhaust the capacity
region?---for BEBCs the answer is {}``yes'' and is again fairly
obvious \cite{1523761}. In Section IV, we prove that for CMLOBCs
this is not always true and further discuss the shape of the capacity
region of CMLOBCs with subspaces taken from the projective plain $\operatorname{PG}(2,2)$.
Plenty of numerical analysis are shown on different cases of CMLOBCs
over $\operatorname{PG}(2,2)$ , via Arimoto-Blahut type algorithm
in \cite{5513525}. Section V concludes the paper. Proofs can be found
in the appendix (Section VII).

\section{Linear Operator Broadcast Channels (LOBCs)}

\subsection{LOBC Module}

We consider the case of a multiple user LOC where a sender communicates
with $K$ receivers $u_{1}$, $u_{2}$,...,$u_{K}$ simultaneously.
The subchannels from the sender to $u_{k}$, $k=1,2,...,K$, are linear
operator channels with input and output alphabets $\mathfrak{X},\mathfrak{Y}\subseteq\bigcup_{i=0}^{m}\mathcal{P}(\mathbb{F}_{q}^{m},i)$,
where $m$ and $q$ are fixed. Let $\mathsf{X},\mathsf{Y}_{1},\dots,\mathsf{Y}_{k}$
be the corresponding random variables. The output at every receiver
is taken 
subject to some joint transfer probability distribution $p(Y_{1},Y_{2},\dots,Y_{k}|X)=p_{\mathsf{Y_{1}},\mathsf{Y}_{2},\dots,\mathsf{Y}_{k}|\mathsf{X}}(Y_{1},Y_{2},\dots,Y_{k}|X)=p(\mathsf{Y}_{1}=Y_{1},\mathsf{Y}_{2}=Y_{2},\dots,\mathsf{Y}_{k}=Y_{k}|\mathsf{X}=X)$.
Such a channel is called \emph{Linear Operator Broadcast Channel (LOBC)}.
For simplicity we restrict ourselves to a LOBC with two receivers
and let $\mathfrak{M}_{1}$, $\mathfrak{M}_{2}$ be the alphabets
of private messages for user $u_{1}$ and $u_{2}$, respectively. 
\begin{defn}
A \emph{broadcast (multishot) subspace code} of length $n$ for the
LOBC consists of a set $\mathfrak{C}\subseteq\mathfrak{X}^{n}$ of
codewords and a corresponding encoder/decoder pair. The LOBC encoder
$\gamma\colon\mathfrak{M}_{1}\times\mathfrak{M}_{2}\to\mathfrak{C}$
maps a message pair $(M_{1},M_{2})$ to a codeword $\mathbf{X}=(X_{1},\dots,X_{n})\in\mathfrak{C}$
(for every transmission generation). The LOBC decoder $\delta=(\delta_{1},\delta_{2})$
consists of two decoding functions $\delta_{i}\colon\mathfrak{Y}^{n}\to\mathfrak{M}_{i}$
($i=1,2$) and maps the corresponding pair $(Y_{1},Y_{2})\in\mathfrak{Y}^{n}\times\mathfrak{Y}^{n}$
of received words to the message pair $(\hat{M}_{1},\hat{M}_{2})=\bigl(\delta_{1}(Y_{1}),\delta_{2}(Y_{2})\bigr)$

The rate pair $(R_{1},R_{2})$, in unit of $q$-ary symbols per subspace
transmission, of the broadcast subspace code is defined as

\begin{equation}
R_{1}=\frac{\log_{q}|\mathfrak{M}_{1}|}{n},\quad R_{2}=\frac{\log_{q}|\mathfrak{M}_{2}|}{n}.\label{eq:LOBCRate}\end{equation}

\end{defn}
As in \cite[Ch.~14.6]{cover1991elements} we can rewrite the encoding
map as \[
\gamma\colon(1,2,...,q^{nR_{1}})\times(1,2,...,q^{nR_{2}})\rightarrow\mathfrak{C}\]
 and associate with the broadcast subspace code the parameters $((q^{nR_{1}},q^{nR_{2}}),n)$. 
\begin{defn}
A rate pair $(R_{1},R_{2})$ is said to be \emph{achievable} if there
exists a sequence of $((q^{nR_{1}},q^{nR_{2}}),n)$ broadcast subspace
codes, for which the corresponding probabilities $p_{n}=p_{n}(\hat{M}_{1}\neq M_{1}\vee\hat{M}_{2}\neq M_{2})$
of decoding error satisfy $p_{n}\to0$ when $n\rightarrow\infty$.%
\footnote{Here we tacitly assume that $n$ runs through some subsequence of
the positive integers for which all numbers $q^{nR_{1}}$, $q^{nR_{2}}$
are integers.%
} The \emph{capacity region} (or \emph{rate region}) of a LOBC is defined
as the closure of the set of all achievable rate pairs. 
\end{defn}

\subsection{CMLOBCs}

If every subchannel in a LOBC is a CMLOC (necessarily with the same
$l$, cf.\ Def.~\ref{dfn:CMLOC}), we call it a \emph{constant-dimension
multiplicative LOBC (CMLOBC)}. For CMLOBCs with ambient space $\mathbb{F}_{q}^{m}$
and constant dimension $l$ the normalized rate pair $(\bar{R}_{1},\bar{R}_{2})$
can be defined in accordance with \eqref{eq:LOBCRate} as \begin{equation}
\bar{R}_{1}=\frac{\log_{q}|\mathfrak{M}_{1}|}{lmn},\quad\bar{R}_{2}=\frac{\log_{q}|\mathfrak{M}_{2}|}{lmn}.\label{eq:LOBCNorRate}\end{equation}

By the principle of time division, it is clear that the capacity region
of a CMLOBC should be at least the triangle area with three corner
points--$(0,0)$, $(C_{2},0)$ and $(0,C_{1})$ on the $(R_{1},R_{2})$
plane, where $C_{i}$ refers to the channel capacity of $\mathsf{X}\rightarrow\mathsf{Y}_{i}$,
and all points $(R_{1},R_{2})$ satisfy $R_{1}/C_{1}+R_{2}/C_{2}=1\wedge R_{1},R_{2}\geq0$
constitute the so called time sharing line.

\section{Degradation Theorem for CMLOBCs}

The following definition of degraded broadcast channels is taken from
\cite{720547}. 
\begin{defn}
A CMLOBC with transfer probabilities $p(Y_{1},Y_{2}|X)$ is said to
be \emph{(stochastically) degraded} if the conditional marginals $p(Y_{1}|X)$,
$p(Y_{2}|X)$ are related by $p(Y_{2}|X)=\sum_{Y_{1}}p(Y_{1}|X)p'(Y_{2}|Y_{1})$
for some conditional distribution $p'(Y_{2}|Y_{1})$. \label{def:CMLOBCDegraded} 
\end{defn}
From Def.~\ref{dfn:CMLOC} it is obvious that CMLOBCs with $(m,q,l)=(2,2,1)$
(the smallest nontrivial examples) are equivalent to ternary erasure
broadcast channels with erasure probabilities 
$\epsilon_{0}^{(1)}$, $\epsilon_{0}^{(2)}$ for the two subchannels.
Like a BEBC such broadcast channels are always degraded. In general,
however, 
CMLOBCs are not degraded. Theorem~\ref{thm:DCMLOBCThm} in this section
gives a necessary and sufficient condition for a CMLOBC to be degraded.
For its proof we need several lemmas. 
\begin{lem}
Let $\boldsymbol{\mathbf{\epsilon}}^{(1)}=(\epsilon_{0}^{(1)},\epsilon_{1}^{(1)},...,\epsilon_{l}^{(1)})$
and $\boldsymbol{\mathbf{\epsilon}}^{(2)}=(\epsilon_{0}^{(2)},\epsilon_{1}^{(2)},...,\epsilon_{l}^{(2)})$
be probability vectors. Then the following two statements are equivalent:

\,(i)\,\,\,\,\,$\sum_{j=0}^{i}\epsilon_{j}^{(1)}\leq\sum_{j=0}^{i}\epsilon_{j}^{(2)}$
for $0\leq i\leq l$;

(ii)\,\,\,\,\,There exists a lower triangular stochastic matrix
$\boldsymbol{\Lambda}=(\lambda_{ij})$ such that $\boldsymbol{\epsilon}^{(1)}\boldsymbol{\Lambda}=\boldsymbol{\epsilon}^{(2)}$.\label{lem:DCMLOBCLem1} 
\end{lem}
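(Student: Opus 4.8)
The plan is to recognize this as the finite discrete version of the classical characterization of first-order stochastic dominance: condition (i) says the cumulative distribution of $\boldsymbol{\epsilon}^{(1)}$ never exceeds that of $\boldsymbol{\epsilon}^{(2)}$, while a lower triangular stochastic $\boldsymbol{\Lambda}$ is precisely a Markov kernel that can only move probability mass toward smaller indices. I would therefore prove the two implications separately, the forward one by a direct summation and the reverse one by exhibiting an explicit ``monotone coupling''. Throughout I would write $F^{(s)}(i)=\sum_{j=0}^{i}\epsilon_{j}^{(s)}$ for $s\in\{1,2\}$, with the conventions $F^{(s)}(-1)=0$ and $F^{(s)}(l)=1$; lower triangularity of $\boldsymbol{\Lambda}=(\lambda_{ij})$ means $\lambda_{ij}=0$ whenever $j>i$, and stochasticity means $\lambda_{ij}\ge 0$ with $\sum_{j}\lambda_{ij}=1$ for every $i$.

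For (ii)$\Rightarrow$(i) I would start from $\epsilon_{j}^{(2)}=\sum_{i\ge j}\epsilon_{i}^{(1)}\lambda_{ij}$ (the sum running only over $i\ge j$ by lower triangularity), fix $k$ with $0\le k\le l$, and interchange the order of summation to get $\sum_{j=0}^{k}\epsilon_{j}^{(2)}=\sum_{i}\epsilon_{i}^{(1)}\sum_{j=0}^{\min(k,i)}\lambda_{ij}$. For $i\le k$ the inner sum is the full row sum $1$, while for $i>k$ it is a partial sum of nonnegative entries, hence $\ge 0$; dropping the latter terms yields $\sum_{j=0}^{k}\epsilon_{j}^{(2)}\ge\sum_{i=0}^{k}\epsilon_{i}^{(1)}$, which is exactly (i). This direction is routine.

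The substance lies in (i)$\Rightarrow$(ii). Here I would build a coupling $\pi_{ij}\ge 0$ with row sums $\sum_{j}\pi_{ij}=\epsilon_{i}^{(1)}$, column sums $\sum_{i}\pi_{ij}=\epsilon_{j}^{(2)}$, and support confined to $j\le i$; then setting $\lambda_{ij}=\pi_{ij}/\epsilon_{i}^{(1)}$ (and filling any row with $\epsilon_{i}^{(1)}=0$ by, say, $\lambda_{ii}=1$) gives the desired lower triangular stochastic matrix satisfying $\boldsymbol{\epsilon}^{(1)}\boldsymbol{\Lambda}=\boldsymbol{\epsilon}^{(2)}$. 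The cleanest construction is the interval-overlap (quantile) coupling: partition $(0,1]$ into the source blocks $I_{i}^{(1)}=\bigl(F^{(1)}(i-1),F^{(1)}(i)\bigr]$ and target blocks $I_{j}^{(2)}=\bigl(F^{(2)}(j-1),F^{(2)}(j)\bigr]$, and set $\pi_{ij}$ equal to the length of $I_{i}^{(1)}\cap I_{j}^{(2)}$. The marginal conditions follow at once from $|I_{i}^{(1)}|=\epsilon_{i}^{(1)}$ and $|I_{j}^{(2)}|=\epsilon_{j}^{(2)}$, and the support claim is exactly where assumption (i) enters: if the two blocks overlapped with $j>i$, a common point $u$ would satisfy $u\le F^{(1)}(i)\le F^{(2)}(i)\le F^{(2)}(j-1)<u$, a contradiction, so $\pi_{ij}=0$ for $j>i$.

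The main obstacle I anticipate is bookkeeping rather than conceptual: making sure the support argument invokes $F^{(1)}\le F^{(2)}$ at the correct index, and handling the degenerate rows $\epsilon_{i}^{(1)}=0$ so that $\boldsymbol{\Lambda}$ stays a genuine stochastic matrix (the normalization $\pi_{ij}/\epsilon_{i}^{(1)}$ is undefined there and must be patched by hand). An induction on $l$ that peels off the top coordinate---using that (i) at $i=l-1$ forces $\epsilon_{l}^{(2)}\le\epsilon_{l}^{(1)}$, so $\lambda_{ll}=\epsilon_{l}^{(2)}/\epsilon_{l}^{(1)}$ is admissible---gives an alternative route, but it then requires redistributing the leftover mass $\epsilon_{l}^{(1)}-\epsilon_{l}^{(2)}$ among the lower coordinates while preserving all prefix inequalities, which is messier than the one-shot coupling above.
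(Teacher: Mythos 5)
Your proof is correct, and the interesting half takes a genuinely different route from the paper's. For (ii)$\Rightarrow$(i) the two arguments are essentially the same computation: the paper postmultiplies $\boldsymbol{\epsilon}^{(1)}\boldsymbol{\Lambda}=\boldsymbol{\epsilon}^{(2)}$ by the lower-triangular all-ones matrix and observes that the resulting entries are at most $1$, which is exactly your interchange of summation with the inner sum bounded by the full row sum. For (i)$\Rightarrow$(ii), however, the paper factors $\boldsymbol{\Lambda}$ as a product of elementary mass-moving matrices (each shifting a $\lambda$-fraction of $\epsilon_{j}^{(1)}$ down to some index $i<j$), invokes closure of lower-triangular stochastic matrices under multiplication, and leaves the existence of such a reducing sequence to an induction it describes only as ``easily proved''; your interval-overlap (quantile) coupling instead produces $\boldsymbol{\Lambda}$ in closed form in one step, with the hypothesis (i) entering precisely and visibly in the support argument $u\le F^{(1)}(i)\le F^{(2)}(i)\le F^{(2)}(j-1)<u$. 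Your version is more self-contained and verifiable (the marginals, the triangularity, and the patch for rows with $\epsilon_{i}^{(1)}=0$ are all checked explicitly), while the paper's route, if completed, yields an algorithmic decomposition into elementary transfers. Both are valid; yours arguably fills a gap the paper glosses over.
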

\begin{proof} See Appendix \ref{sub:ProfDCMLOBCLem1}. \end{proof}

For $0\leq l,s\leq m$ let $\mathcal{D}_{ls}$ be the incidence structure
{}``$l$-dimensional vs. $s$-dimensional subspaces of $\mathbb{F}_{q}^{m}$
with respect to set inclusion''. Relative to suitable orderings of
the input and output alphabet, the channel matrix of the CMLOC of
constant dimension $l$ with probability vector $\boldsymbol{\epsilon}=(\epsilon_{0},\epsilon_{1},\dots,\epsilon_{l})$
can be partitioned as \begin{equation}
\mathbf{S}=\bigl(\epsilon_{0}\mathbf{S}_{l0}\mid\epsilon_{1}\mathbf{S}_{l1}\mid\dots\mid\epsilon_{l}\mathbf{S}_{ll}\bigr),\label{eq:partition}\end{equation}
 where $\mathbf{S}_{ls}$ ({}``stochastic incidence matrix'' of
$\mathcal{D}_{ls}$) denotes an appropriate scalar multiple of the
incidence matrix of $\mathcal{D}_{ls}$, determined by the requirement
that $\mathbf{S}_{ls}$ be a (row) stochastic matrix.%
\footnote{The scaling factor for $\mathcal{D}_{ls}$ is $\binom{l}{s}_{q}^{-1}$.%
} 
\begin{lem}
For integers $l,s,t\in\{0,1,\dots,m\}$ with $l\geq s\geq t$ we have
$\mathbf{S}_{ls}\mathbf{S}_{st}=\mathbf{S}_{lt}$. \label{lem:DCMLOBCLem2} 
\end{lem}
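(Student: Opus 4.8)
The plan is to prove the identity entrywise. Both $\mathbf{S}_{ls}\mathbf{S}_{st}$ and $\mathbf{S}_{lt}$ are matrices whose rows are indexed by $l$-dimensional subspaces $X$ and whose columns are indexed by $t$-dimensional subspaces $W$ of $\mathbb{F}_{q}^{m}$, so it suffices to show that the $(X,W)$ entries agree for every such pair. Recall from the normalization in the footnote that $(\mathbf{S}_{ls})_{X,Y}=\binom{l}{s}_{q}^{-1}$ when $Y\subseteq X$ and $0$ otherwise, and analogously for $\mathbf{S}_{st}$; the row-stochasticity is just the statement that an $l$-space contains exactly $\binom{l}{s}_{q}$ subspaces of dimension $s$.

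First I would unwind the matrix product by summing over the intermediate $s$-dimensional index $Y$:
\begin{equation}
(\mathbf{S}_{ls}\mathbf{S}_{st})_{X,W}=\sum_{\dim Y=s}(\mathbf{S}_{ls})_{X,Y}(\mathbf{S}_{st})_{Y,W}=\frac{N(X,W)}{\binom{l}{s}_{q}\binom{s}{t}_{q}},
\end{equation}
where $N(X,W)$ is the number of $s$-dimensional subspaces $Y$ with $W\subseteq Y\subseteq X$. The only surviving terms are those for which $W\subseteq Y$ and $Y\subseteq X$ hold simultaneously, which already forces $W\subseteq X$. Next I would evaluate $N(X,W)$ in two cases. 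If $W\not\subseteq X$ then $N(X,W)=0$, matching $(\mathbf{S}_{lt})_{X,W}=0$. If $W\subseteq X$, counting the $s$-spaces $Y$ between $W$ and $X$ is the same as counting $(s-t)$-dimensional subspaces of the quotient $X/W$, a space of dimension $l-t$, so $N(X,W)=\binom{l-t}{s-t}_{q}$. Comparing with the target entry $(\mathbf{S}_{lt})_{X,W}=\binom{l}{t}_{q}^{-1}$ then reduces the lemma to the single scalar identity
\begin{equation}
\binom{l}{s}_{q}\binom{s}{t}_{q}=\binom{l}{t}_{q}\binom{l-t}{s-t}_{q}.
\end{equation}

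The remaining step is to establish this Gaussian binomial identity, which I would do combinatorially rather than algebraically: both sides count the nested pairs $(W,Y)$ of subspaces of a fixed $l$-space with $\dim W=t$, $\dim Y=s$ and $W\subseteq Y$, the left-hand side choosing $Y$ first and then a $t$-subspace inside it, the right-hand side choosing $W$ first and then an $s$-space containing it via the quotient count $\binom{l-t}{s-t}_{q}$. I do not expect a genuine obstacle here—the argument is routine once the product is unwound—so the only points that need care are the dimension bookkeeping in the quotient-space count $\binom{l-t}{s-t}_{q}$ and the tracking of the two normalization constants, so that what remains is exactly the scalar identity above. (Equivalently, one may phrase the whole computation probabilistically: $\mathbf{S}_{ls}$ sends $X$ to a uniformly random $s$-subspace of $X$, and composing two such uniform choices yields the uniform choice of a $t$-subspace of $X$ by transitivity, giving $\mathbf{S}_{lt}$ directly.)
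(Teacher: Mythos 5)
Your proof is correct and follows essentially the same route as the paper's: both arguments reduce to counting the $s$-dimensional subspaces $Y$ with $W\subseteq Y\subseteq X$, which is $\binom{l-t}{s-t}_{q}$ via the quotient $X/W$. The only difference is the final normalization step — the paper observes that a product of row-stochastic matrices is row-stochastic and so the scalar multiple must be $1$, whereas you verify the equivalent Gaussian binomial identity $\binom{l}{s}_{q}\binom{s}{t}_{q}=\binom{l}{t}_{q}\binom{l-t}{s-t}_{q}$ directly by double counting; both are valid.
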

\begin{proof} See Appendix \ref{sub:ProfDCMLOBCLem2}. \end{proof}
A CMLOBC with subchannels having channel matrices $\mathbf{S}^{(1)}$,
$\mathbf{S}^{(2)}$ is degraded if and only if there exists a stochastic
matrix $\mathbf{T}\in\mathbb{R}^{M\times M}$ (where $M=\sum_{s=0}^{l}\binom{m}{s}_{q}$)
such that $\mathbf{S}^{(2)}=\mathbf{S}^{(1)}\mathbf{T}$ (see \cite[Ch.~14.6]{cover1991elements}).
Partitioning $\mathbf{S}^{(1)}$, $\mathbf{S}^{(2)}$ as in \eqref{eq:partition}
and $\mathbf{T}$ accordingly, we can write this as \begin{align}
\bigl(\epsilon_{0}^{(1)}\mathbf{S}_{l0} & \mid\epsilon_{1}^{(1)}\mathbf{S}_{l1}\mid\dots\mid\epsilon_{l}^{(1)}\mathbf{S}_{ll}\bigr)\begin{pmatrix}\mathbf{T}_{00} & \mathbf{T}_{01} & \cdots & \mathbf{T}_{0l}\\
\mathbf{T}_{10} & \mathbf{T}_{11} & \cdots & \mathbf{T}_{0l}\\
\vdots & \vdots & \ddots & \vdots\\
\mathbf{T}_{l0} & \mathbf{T}_{l1} & \cdots & \mathbf{T}_{ll}\end{pmatrix}\nonumber \\
 & =\bigl(\epsilon_{0}^{(2)}\mathbf{S}_{l0}\mid\epsilon_{1}^{(2)}\mathbf{S}_{l1}|\dots\mid\epsilon_{l}^{(2)}\mathbf{S}_{ll}\bigr)\label{eq:ST}\end{align}
 With these preparations it is possible to prove 
\begin{thm}
Let $\boldsymbol{\epsilon}^{(1)}$ and $\boldsymbol{\epsilon}^{(2)}$
be probability vectors associated with the two subchannels $\mathsf{X}\to\mathsf{Y}_{1}$
and $\mathsf{X}\to\mathsf{Y}_{2}$, respectively, of a CMLOBC with
ambient space $\mathbb{F}_{q}^{m}$ and constant dimension $l<m$.
The CMLOBC is degraded (in the sense that $\mathsf{Y}_{2}$ is a degraded
version of $\mathsf{Y}_{1}$) if and only if $\epsilon^{(1)}$ and
$\epsilon^{(2)}$ satisfy \begin{equation}
\sum_{j=0}^{i}\epsilon_{j}^{(1)}\leq\sum_{j=0}^{i}\epsilon_{j}^{(2)}\quad\text{for \ensuremath{0\leq i\leq l}}.\footnotemark\label{eq:DCMLOBCCondition}\end{equation}
 \footnotetext{Note that \eqref{eq:DCMLOBCCondition} can be rewritten
as $\sum_{j=i}^{l}\epsilon_{j}^{(1)}\geq\sum_{j=i}^{l}\epsilon_{j}^{(2)}$
for $0\leq i\leq l$ and implies in particular that the probabilities
of successful transmission are related by $\epsilon_{l}^{(1)}\geq\epsilon_{l}^{(2)}$.}
\label{thm:DCMLOBCThm} 
\end{thm}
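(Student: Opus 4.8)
The plan is to use the matrix criterion for stochastic degradation, $\mathbf{S}^{(2)}=\mathbf{S}^{(1)}\mathbf{T}$ for some stochastic $\mathbf{T}$, together with the partition \eqref{eq:ST}, and to reduce everything to Lemma~\ref{lem:DCMLOBCLem1}: condition \eqref{eq:DCMLOBCCondition} is exactly statement~(i) of that lemma, so it suffices to connect the existence of a degrading $\mathbf{T}$ with the existence of a \emph{lower triangular} stochastic matrix $\boldsymbol{\Lambda}$ satisfying $\boldsymbol{\epsilon}^{(1)}\boldsymbol{\Lambda}=\boldsymbol{\epsilon}^{(2)}$ (statement~(ii)). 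The two implications are handled separately, and the easy one is sufficiency.

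For sufficiency, I would assume \eqref{eq:DCMLOBCCondition} and, by Lemma~\ref{lem:DCMLOBCLem1}, pick a lower triangular stochastic $\boldsymbol{\Lambda}=(\lambda_{is})$ with $\boldsymbol{\epsilon}^{(1)}\boldsymbol{\Lambda}=\boldsymbol{\epsilon}^{(2)}$. I then define $\mathbf{T}$ in the block form of \eqref{eq:ST} by setting $\mathbf{T}_{is}=\lambda_{is}\mathbf{S}_{is}$ for $i\ge s$ and $\mathbf{T}_{is}=\mathbf{0}$ for $i<s$. This $\mathbf{T}$ is stochastic: its entries are nonnegative, and since each $\mathbf{S}_{is}$ is row stochastic, block row $i$ has row sum $\sum_{s\le i}\lambda_{is}=1$. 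Finally, the $s$-th block of $\mathbf{S}^{(1)}\mathbf{T}$ equals $\sum_{i\ge s}\epsilon_i^{(1)}\lambda_{is}\,\mathbf{S}_{li}\mathbf{S}_{is}$, which by Lemma~\ref{lem:DCMLOBCLem2} collapses to $\bigl(\sum_{i\ge s}\epsilon_i^{(1)}\lambda_{is}\bigr)\mathbf{S}_{ls}=\epsilon_s^{(2)}\mathbf{S}_{ls}$, so $\mathbf{S}^{(1)}\mathbf{T}=\mathbf{S}^{(2)}$ and the channel is degraded.

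For necessity, suppose $\mathbf{S}^{(2)}=\mathbf{S}^{(1)}\mathbf{T}$ with $\mathbf{T}$ stochastic. The first step is a symmetrization: the group $G=\operatorname{GL}(m,q)$ acts by permutation matrices $P_g,Q_g$ on the input and output alphabets, and since the transition probabilities depend only on incidence and dimension the channel matrices are equivariant, $P_g\mathbf{S}^{(r)}=\mathbf{S}^{(r)}Q_g$. Replacing $\mathbf{T}$ by $\bar{\mathbf{T}}=\frac{1}{|G|}\sum_{g}Q_g^{-1}\mathbf{T}Q_g$ keeps it stochastic and degrading while making it $G$-invariant, so each block $\bar{\mathbf{T}}_{is}$ has constant row sum $t_{is}\ge0$ with $\sum_s t_{is}=1$. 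The crucial second step eliminates rank-raising: comparing the $(X,Y)$-entries of both sides when $\dim X=l$, $\dim Y=s$ and $Y\not\subseteq X$, the left side vanishes while the right side is a sum of nonnegative terms, forcing $(\bar{\mathbf{T}}_{is})_{Z,Y}=0$ for every $i$ with $\epsilon_i^{(1)}>0$ and every $i$-subspace $Z\subseteq X$. When $s>i$ one has $Y\not\subseteq Z$, and because $l<m$ one can choose a hyperplane through $Z$ avoiding a fixed vector of $Y\setminus Z$ and, inside it, an $l$-subspace $X\supseteq Z$ with $Y\not\subseteq X$; hence $\bar{\mathbf{T}}_{is}=0$ and $t_{is}=0$ whenever $s>i$. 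Multiplying the $s$-th block of $\mathbf{S}^{(1)}\bar{\mathbf{T}}=\mathbf{S}^{(2)}$ on the right by $\mathbf{1}$ and using $\mathbf{S}_{li}\mathbf{1}=\mathbf{1}$ and $\bar{\mathbf{T}}_{is}\mathbf{1}=t_{is}\mathbf{1}$ yields $\epsilon_s^{(2)}=\sum_{i\ge s}\epsilon_i^{(1)}t_{is}$. After redefining the irrelevant rows with $\epsilon_i^{(1)}=0$ to be standard basis vectors, $\boldsymbol{\Lambda}=(t_{is})$ is lower triangular stochastic with $\boldsymbol{\epsilon}^{(1)}\boldsymbol{\Lambda}=\boldsymbol{\epsilon}^{(2)}$, and Lemma~\ref{lem:DCMLOBCLem1} delivers \eqref{eq:DCMLOBCCondition}.

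I expect the necessity direction, and within it the support argument, to be the main obstacle. A generic degrading matrix $\mathbf{T}$ may transport probability mass to strictly higher-dimensional subspaces, which would destroy the monotone partial-sum structure, so the whole point is to show this cannot occur. The symmetrization is what turns the block row sums into honest scalars $t_{is}$, but it does not by itself forbid rank-raising, since $G$-invariant rank-raising maps certainly exist. Ruling them out is precisely where the hypothesis $l<m$ enters---it guarantees enough room to find an input $l$-subspace $X\supseteq Z$ missing a prescribed vector---and I would expect the cleanest formulation of this incidence fact to be the step requiring the most care.
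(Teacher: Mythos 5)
Your proof is correct, and in its essentials it follows the same path as the paper: the same block decomposition \eqref{eq:ST}, the same use of Lemma~\ref{lem:DCMLOBCLem2} and Lemma~\ref{lem:DCMLOBCLem1} for sufficiency (that half is virtually identical), and the same crucial incidence-geometry step for necessity --- exhibiting an $l$-subspace $X\supseteq Z$ with $Y\not\subseteq X$ (using $l<m$) to kill the rank-raising blocks. Where you diverge is in how the necessity direction is finished. You first symmetrize $\mathbf{T}$ by averaging over $\operatorname{GL}(m,q)$, which makes every block have a constant row sum $t_{is}$, so that right-multiplying by $\mathbf{1}$ hands you an honest lower triangular stochastic matrix $\boldsymbol{\Lambda}=(t_{is})$ with $\boldsymbol{\epsilon}^{(1)}\boldsymbol{\Lambda}=\boldsymbol{\epsilon}^{(2)}$, and you then invoke the (ii)$\Rightarrow$(i) direction of Lemma~\ref{lem:DCMLOBCLem1}. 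The paper skips the symmetrization entirely: after establishing $\mathbf{T}_{ij}=0$ for $i<j$ it postmultiplies \eqref{eq:ST} by the block lower-triangular matrix $\mathbf{L}=(\mathbf{S}_{ij})$, applies the resulting identity to the all-one vector, and reads off the partial-sum inequalities directly from $(\mathbf{S}_{li}\boldsymbol{\Delta}_{ij})\mathbf{j}\le\mathbf{j}$, never needing the degrading matrix to have constant block row sums and never using Lemma~\ref{lem:DCMLOBCLem1} in that direction. Your averaging argument costs an extra step but yields a slightly stronger intermediate conclusion (a canonical $G$-invariant, block-lower-triangular degrading matrix), whereas the paper's $\mathbf{L}$-trick is shorter and stays entirely within the substochastic-inequality framework. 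Both handle the $\epsilon_i^{(1)}=0$ rows by the same ad hoc repair, and both proofs are sound.
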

\begin{proof} See Appendix \ref{sub:ProfDCMLOBCThm}\end{proof}

The excluded case $l=m$ is indeed exceptional: In this case there
is only one input subspace, so that the channel matrices reduce to
probability vectors $\mathbf{s}^{(1)},\mathbf{s}^{(2)}$ of length
$M$, where $M=\sum_{i=0}^{m}\binom{m}{i}_{q}$ is the total number
of subspaces of $\mathbb{F}_{q}^{m}$. However any two probability
vectors $\mathbf{s}^{(1)},\mathbf{s}^{(2)}$ are related by $\mathbf{s}^{(1)}\mathbf{T}=\mathbf{s}^{(2)}$
for some stochastic matrix $\mathbf{T}$ of the appropriate size.
(The matrix $\mathbf{T}=\mathbf{j}\mathbf{s}^{(2)}$, where $\mathbf{j}$
is the all-one column vector of the same dimension as $\mathbf{s}$,
does the job.) This shows that in the case $l=m$ the broadcast channel
is degraded for all choices of $\boldsymbol{\epsilon}^{(1)}$, $\boldsymbol{\epsilon}^{(2)}$.
\begin{cor}
Under the assumptions of Th.~\ref{thm:DCMLOBCThm}, suppose that
$\boldsymbol{\epsilon}^{(1)}$ and $\boldsymbol{\epsilon}^{(2)}$
satisfy \begin{equation}
\epsilon_{i}^{(1)}\leq\epsilon_{i}^{(2)}\quad\text{for every}\quad i\in\{0,1,2,...,l-1\}.\label{eq:StrongDegradedCondition}\end{equation}
(and consequently $\epsilon_{l}^{(1)}\geq\epsilon_{l}^{(2)}$) Then
the CMLOBC is degraded (in the sense that $\mathsf{Y}_{2}$ is a degraded
version of $\mathsf{Y}_{1}$) . \label{cor:DCMLOBCCoro1} 
\end{cor}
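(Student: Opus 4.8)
The plan is to derive the hypothesis \eqref{eq:DCMLOBCCondition} of Theorem~\ref{thm:DCMLOBCThm} directly from the pointwise domination \eqref{eq:StrongDegradedCondition}, after which degradation follows at once by that theorem. The guiding observation is that a termwise inequality $\epsilon_i^{(1)}\le\epsilon_i^{(2)}$ is strictly stronger than the majorization-type inequality \eqref{eq:DCMLOBCCondition} on partial sums, so no new machinery is needed beyond what Theorem~\ref{thm:DCMLOBCThm} already provides.

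Concretely, I would fix any index $i$ with $0\le i\le l-1$ and sum the inequalities \eqref{eq:StrongDegradedCondition} over $j=0,1,\dots,i$; this immediately gives $\sum_{j=0}^{i}\epsilon_j^{(1)}\le\sum_{j=0}^{i}\epsilon_j^{(2)}$, establishing \eqref{eq:DCMLOBCCondition} for all indices below $l$. For the remaining index $i=l$ I would instead invoke that $\boldsymbol{\epsilon}^{(1)}$ and $\boldsymbol{\epsilon}^{(2)}$ are probability vectors, so that both partial sums equal $1$ and \eqref{eq:DCMLOBCCondition} holds with equality. Combining the two ranges yields \eqref{eq:DCMLOBCCondition} for every $0\le i\le l$, and Theorem~\ref{thm:DCMLOBCThm} then certifies that the CMLOBC is degraded.

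I do not expect any genuine obstacle, since the corollary is merely a transparent specialization of Theorem~\ref{thm:DCMLOBCThm}. The one point deserving a word of care is the boundary index $i=l$, where the termwise comparison is unavailable (indeed $\epsilon_l^{(1)}\ge\epsilon_l^{(2)}$ runs in the opposite direction) and one must appeal to normalization instead. This same normalization also yields the parenthetical claim: subtracting the case $i=l-1$ of \eqref{eq:DCMLOBCCondition} from the identity $\sum_{j=0}^{l}\epsilon_j^{(k)}=1$ gives $\epsilon_l^{(1)}\ge\epsilon_l^{(2)}$.
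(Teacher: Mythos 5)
Your proposal is correct and matches the intended argument: the paper gives no separate proof of this corollary precisely because it follows from Theorem~\ref{thm:DCMLOBCThm} by summing the termwise inequalities for $i\le l-1$ and using normalization for $i=l$, exactly as you do. Your handling of the boundary index and the parenthetical claim $\epsilon_l^{(1)}\ge\epsilon_l^{(2)}$ is also correct.
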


\section{The Capacity Region of Degraded CMLOBCs over the Projective Plane
$\operatorname{PG}(2,2)$}

\subsection{Degraded CMLOBCs over the Projective Plane $\operatorname{PG}(2,2)$}

Let $q=2$, $m=3$, $l=2$ and $p(Y_{i}|X)$, $i=1,2$, be defined
through the channel matrices \begin{equation}
\mathbf{S}^{(i)}=\bigl(\epsilon_{0}^{(i)}\mathbf{J}_{7\times1}\mid\epsilon_{1}^{(i)}\mathbf{S}_{21}\mid\epsilon_{2}^{(i)}\mathbf{I}_{7\times7}\bigr),\label{eq:WCMLOBExa}\end{equation}
 where $\mathbf{J}_{7\times1}$, $\mathbf{I}_{7\times7}$ denote the
all-one, respectively, the identity matrix of the indicated sizes
and $\mathbf{S}_{21}$ is a stochastic incidence matrix of $2$-dimensional
vs. $1$-dimensional subspaces of $\mathbb{F}_{2}^{3}$ (in other
words, an incidence matrix of the smallest projective plane $\operatorname{PG}(2,2)$).
For example, we can take \begin{equation}
\mathbf{S}_{21}=\frac{1}{3}\left(\begin{array}{ccccccc}
1 & 1 & 0 & 1 & 0 & 0 & 0\\
1 & 0 & 1 & 0 & 1 & 0 & 0\\
0 & 1 & 1 & 0 & 0 & 1 & 0\\
0 & 0 & 1 & 1 & 0 & 0 & 1\\
0 & 1 & 0 & 0 & 1 & 0 & 1\\
1 & 0 & 0 & 0 & 0 & 1 & 1\\
0 & 0 & 0 & 1 & 1 & 1 & 0\end{array}\right).\label{eq:S21}\end{equation}

By Th.~\ref{thm:DCMLOBCThm} the CMLOBC is degraded if and only if
$\epsilon_{0}^{(1)}\leq\epsilon_{0}^{(2)}\wedge\epsilon_{0}^{(1)}+\epsilon_{1}^{(1)}\leq\epsilon_{0}^{(2)}+\epsilon_{1}^{(2)}$
or, equivalently, $\epsilon_{0}^{(1)}\leq\epsilon_{0}^{(2)}\wedge\epsilon_{2}^{(1)}\geq\epsilon_{2}^{(2)}$.

Taking into account symmetry properties and keeping in mind the example
of binary symmetric broadcast channels discussed in \cite[Ch.~14.6]{cover1991elements},
one might conjecture that the boundary of the rate region is obtained
by taking the joint distribution $p(U,X)$ which arises from a $7$-ary
symmetric channel $\mathsf{U}\to\mathsf{X}$ and the uniform input
distribution on $\mathfrak{U}$. This one-parameter family of distributions
can be written in matrix form as \begin{align}
\bigl(p(U_{i},X_{j})\bigr) & =\tfrac{1}{7}\left(\tfrac{\sigma}{6}\mathbf{J}_{7\times7}+\left(1-\tfrac{7\sigma}{6}\right)\mathbf{I}_{7\times7}\right),\quad0\leq\sigma\leq\tfrac{6}{7}.\label{eq:pxu1}\end{align}

\begin{lem}
For the degraded CMLOBCs described by \eqref{eq:WCMLOBExa}, let $p_{\mathsf{U},\mathsf{X}}(U,X)$
be chosen as in \eqref{eq:pxu1}, and with $R_{1}(\sigma)=I(\mathsf{X},\mathsf{Y}_{1}|\mathsf{U})$,
$R_{2}(\sigma)=I(\mathsf{U},\mathsf{Y}_{2})$ let $\Gamma=\{\bigl(R_{1}(\sigma),R_{2}(\sigma)\bigr)\mid\sigma\in[0,6/7]\}$.
Then the curve $\Gamma$, considered as a function $R_{2}=f(R_{1})$
is defined on $[0,C_{1}]$, strictly decreasing, and satisfies $f(0)=C_{2}$,
$f(C_{1})=0$. Further we have:

(i) $f$ is strictly concave ($\cap$) when $\epsilon_{1}^{(1)}\epsilon_{2}^{(2)}>\epsilon_{1}^{(2)}\epsilon_{2}^{(1)}$;

(ii) $f$ is strictly convex ($\cup$) when $\epsilon_{1}^{(1)}\epsilon_{2}^{(2)}<\epsilon_{1}^{(2)}\epsilon_{2}^{(1)}$;

(iii) $f$ is linear (i.e.\ $\Gamma$ coincides with the time-sharing
line) when $\epsilon_{1}^{(1)}\epsilon_{2}^{(2)}=\epsilon_{1}^{(2)}\epsilon_{2}^{(1)}$.
\label{lem:WCMLOBCLem1} 
\end{lem}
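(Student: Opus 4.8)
The plan is to exploit a decomposition of the received subspace into an \emph{erasure type} and a \emph{content}, which diagonalizes the two rate functions into pieces free of $\boldsymbol{\epsilon}$. Write $\mathsf{Y}_i=(\mathsf{Z}_i,\mathsf{W}_i)$, where $\mathsf{Z}_i=\dim\mathsf{Y}_i\in\{0,1,2\}$ records the type and $\mathsf{W}_i$ the received subspace. By Def.~\ref{dfn:CMLOC} the type $\mathsf{Z}_i$ is independent of $(\mathsf{U},\mathsf{X})$ with $P(\mathsf{Z}_i=j)=\epsilon_j^{(i)}$, and conditioned on $\mathsf{Z}_i=0$ the content is constant, on $\mathsf{Z}_i=2$ it equals $\mathsf{X}$, and on $\mathsf{Z}_i=1$ it is a uniformly random point of $\mathsf{X}$. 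Feeding this into the chain rule together with the Markov chain $\mathsf{U}-\mathsf{X}-\mathsf{Y}_i$, I would obtain
\[
R_1(\sigma)=\epsilon_1^{(1)}a(\sigma)+\epsilon_2^{(1)}b(\sigma),\qquad R_2(\sigma)=C_2-\epsilon_1^{(2)}a(\sigma)-\epsilon_2^{(2)}b(\sigma),
\]
where $a(\sigma)=I(\mathsf{X};\mathsf{W}\mid\mathsf{U})$ is the conditional information of the incidence (line-to-point) channel and $b(\sigma)=H(\mathsf{X}\mid\mathsf{U})$. The decisive structural point is that $a$ and $b$ depend on $\sigma$ only, so that both coordinates of $\Gamma$ are affine in the \emph{same} two universal functions.

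The second step computes $a,b$ from the $7$-ary symmetric law \eqref{eq:pxu1} and the incidence geometry of $\operatorname{PG}(2,2)$ (each line carries $3$ points, each point lies on $3$ lines). By transitivity of the collineation group on lines these conditional quantities are the same for every value of $\mathsf{U}$, so one may fix $\mathsf{U}=U_0$; direct differentiation then gives
\[
b'(\sigma)=\log_q\frac{6(1-\sigma)}{\sigma},\qquad a'(\sigma)=\frac{2}{3}\log_q\frac{6-4\sigma}{3\sigma},
\]
both strictly positive on $(0,6/7)$ and vanishing at $\sigma=6/7$. Hence $R_1'=\epsilon_1^{(1)}a'+\epsilon_2^{(1)}b'>0$ and $R_2'=-\epsilon_1^{(2)}a'-\epsilon_2^{(2)}b'<0$ for the nondegenerate channels ($C_1,C_2>0$), so $\sigma\mapsto R_1$ is a strictly increasing diffeomorphism and $f=R_2\circ R_1^{-1}$ is well defined and strictly decreasing. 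At $\sigma=0$ we have $\mathsf{X}=\mathsf{U}$, giving $R_1=0$ and $R_2=I(\mathsf{X};\mathsf{Y}_2)=C_2$; at $\sigma=6/7$, $\mathsf{X}$ is uniform and independent of $\mathsf{U}$, giving $R_2=0$ and $R_1=C_1$, where uniform input is capacity-achieving by symmetry \cite[Th.~4]{uchoa2010capacity}. This pins the endpoints $f(0)=C_2$, $f(C_1)=0$.

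For the curvature I would use the parametric identity $f''=(R_2''R_1'-R_2'R_1'')/(R_1')^3$. Since $R_1,R_2$ are affine in $(a,b)$, the bracket collapses to
\[
R_2''R_1'-R_2'R_1''=\bigl(\epsilon_1^{(2)}\epsilon_2^{(1)}-\epsilon_1^{(1)}\epsilon_2^{(2)}\bigr)\bigl(a'b''-a''b'\bigr),
\]
so $\operatorname{sign}f''=\operatorname{sign}\bigl(\epsilon_1^{(2)}\epsilon_2^{(1)}-\epsilon_1^{(1)}\epsilon_2^{(2)}\bigr)\cdot\operatorname{sign}\bigl(a'b''-a''b'\bigr)$ because $R_1'>0$. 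Case (iii) is then immediate even without curvature: when $\epsilon_1^{(1)}\epsilon_2^{(2)}=\epsilon_1^{(2)}\epsilon_2^{(1)}$ the vectors $(\epsilon_1^{(1)},\epsilon_2^{(1)})$ and $(\epsilon_1^{(2)},\epsilon_2^{(2)})$ are proportional, whence $R_2$ is an exact affine function of $R_1$ and $\Gamma$ is the time-sharing line. Cases (i) and (ii) follow once the sign of the ``Wronskian'' $a'b''-a''b'$ is settled.

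The hard part is showing $a'b''-a''b'>0$ on $(0,6/7)$. Using $b''=-\tfrac{1}{\ln q}\tfrac{1}{\sigma(1-\sigma)}$ and $a''=-\tfrac{4}{\ln q}\tfrac{1}{\sigma(6-4\sigma)}$ and clearing the positive factor $\tfrac{2}{(\ln q)^2\sigma}$, the claim reduces to
\[
\frac{\ln u}{3-2\sigma}>\frac{\ln v}{3(1-\sigma)},\qquad u:=\frac{6(1-\sigma)}{\sigma},\quad v:=\frac{6-4\sigma}{3\sigma}.
\]
The key simplification is the affine relation $u-1=3(v-1)$, i.e.\ $u=3v-2$, together with $\sigma=6/(3v+4)$, which give $3-2\sigma=\tfrac{9v}{3v+4}$ and $3(1-\sigma)=\tfrac{3u}{3v+4}$ and thereby turn the inequality into $u\ln u>3v\ln v$, that is $(3v-2)\ln(3v-2)>3v\ln v$ for $v>1$ (the range of $v$ as $\sigma$ runs over $(0,6/7)$). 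Setting $G(v)=(3v-2)\ln(3v-2)-3v\ln v$ one finds $G(1)=0$ and $G'(v)=3\ln\frac{3v-2}{v}>0$ for $v>1$ (since $3-2/v>1$), so $G>0$ on $(1,\infty)$, establishing $a'b''-a''b'>0$ and hence strict concavity in case (i) and strict convexity in case (ii). I expect this reduction — in particular spotting the affine relation $u=3v-2$ that collapses a seemingly two-variable transcendental estimate to the monotonicity of $G$ — to be the main obstacle.
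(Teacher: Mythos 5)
Your proposal is correct and follows essentially the same route as the paper's proof: the same decomposition of $I(\mathsf{X};\mathsf{Y}^{(1)}\mid\mathsf{U})$ and $I(\mathsf{U};\mathsf{Y}^{(2)})$ over the erasure dimensions into two universal functions of $\sigma$, the same explicit derivative computations, and the same parametric second-derivative formula collapsing the sign of $f''$ to that of $\epsilon_{1}^{(2)}\epsilon_{2}^{(1)}-\epsilon_{1}^{(1)}\epsilon_{2}^{(2)}$ times a positive universal factor. The only (harmless) divergence is the final positivity check of that factor: the paper argues via concavity of the function $g(\sigma)$ together with its boundary values, whereas you substitute $u=3v-2$ and reduce to the monotonicity of $(3v-2)\ln(3v-2)-3v\ln v$ on $(1,\infty)$; both verifications are valid.
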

\begin{proof} See Appendix \ref{sub:ProfWCMLOBCLem1}.\end{proof} 
\begin{rem}
If Case~(i) holds for a degraded CMLOBC, then there exist superposition
coding schemes which are superior to time sharing with respect to
channel throughput. On the other hand, the family of joint distributions~\eqref{eq:pxu1}
does not necessarily determine the boundary of the capacity region.
In particular we cannot conclude that in Case (ii) or (iii) of Lemma~\ref{lem:WCMLOBCLem1}
the boundary is the time-sharing line. \label{rmk:WCMLOBCLem1} 
\end{rem}

\subsection{Numerical Analysis}

In each figure about the capacity region of some CMLOBC, we use a
``\textit{filter}'' to delete points located below the time sharing
line on the rate region plane and we always display two subfigures
``\textit{before filter}'' and ``\textit{after filter}'' at the same
time. All the figures have enough pixel information to allow enlarging
details. Relevant M-files can be found at \cite{Panga}. Our analysis
was done using MATLAB on a Linux system. 
\begin{example}
Let $\boldsymbol{\epsilon}^{(1)}=(0.05,0.24,0.71)$, $\boldsymbol{\epsilon}^{(2)}=(0.30,0.15,0.55)$.
Then the condition of Case (i) is satisfied. Numerical results obtained
by using the Arimoto-Blahut type algorithm from \cite{5513525} are
shown in Fig.\,\ref{fig:CREx11}. 
\end{example}
\begin{figure}[h]
\subfloat[before filter]{\includegraphics[scale=0.28]{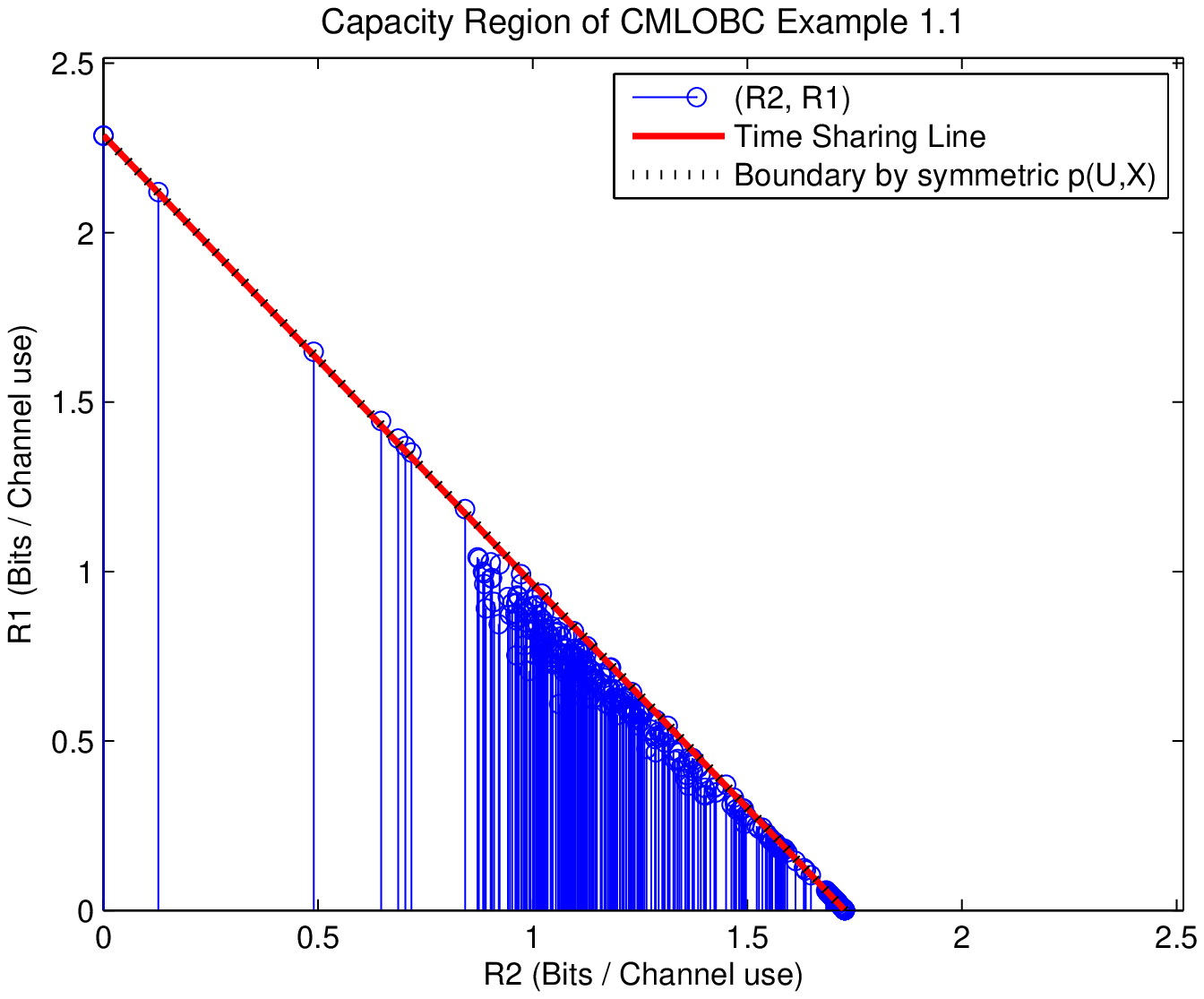}}\subfloat[after filter]{\includegraphics[scale=0.28]{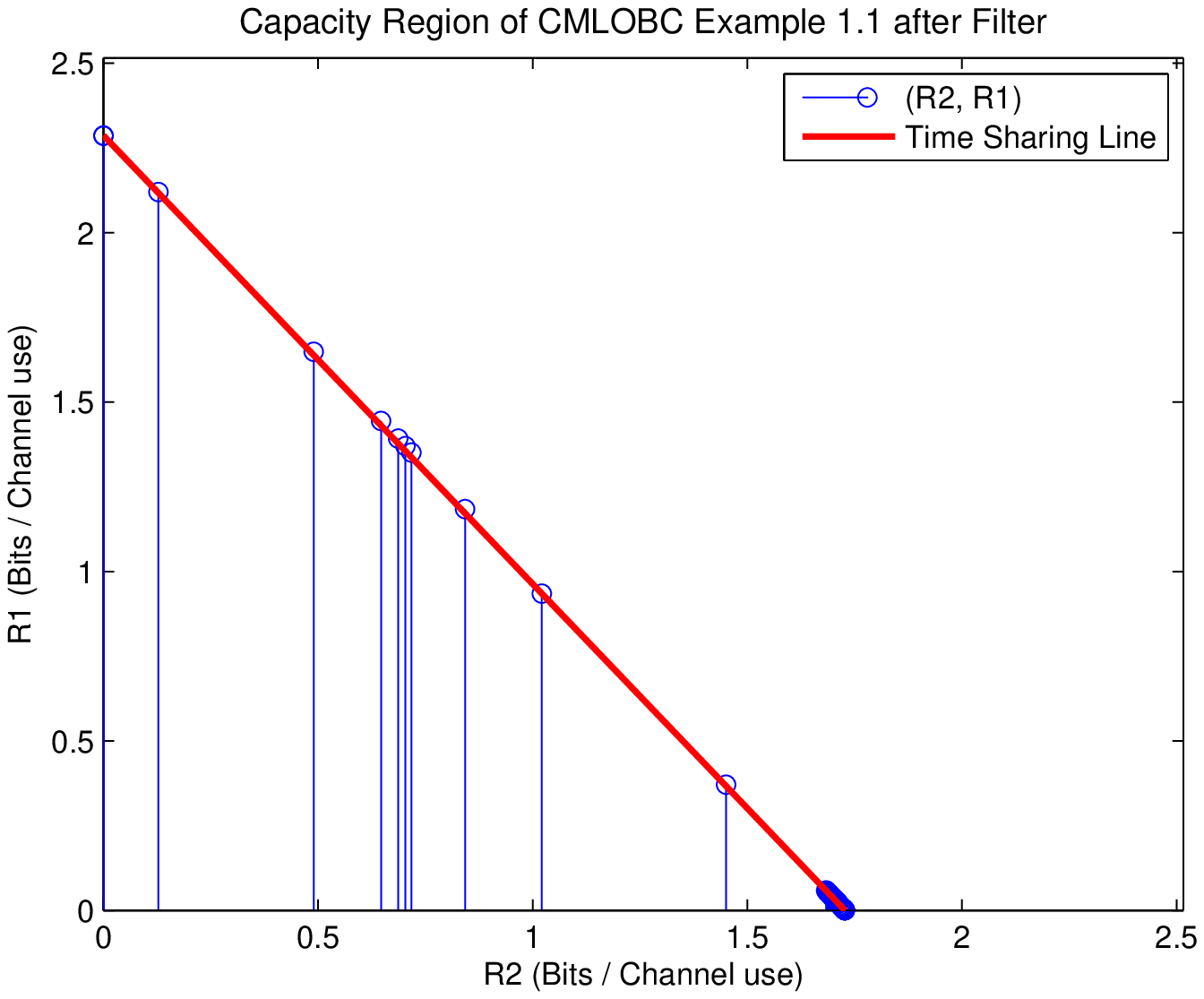}}\caption{Capacity region of Example 1, $\boldsymbol{\epsilon}^{(1)}=(0.05,0.24,0.71)$,
$\boldsymbol{\epsilon}^{(2)}=(0.30,0.15,0.55)$. \label{fig:CREx11}}

\end{figure}

\begin{example}
Let $\boldsymbol{\epsilon}^{(1)}=(0.05,0.20,0.75)$, $\boldsymbol{\epsilon}^{(2)}=(0.30,0.15,0.55)$,
the condition of case (ii) is satisfied. Numerical results are shown
in Fig.\,\ref{fig:CREx12} indicating that time sharing might be
sufficient to exhaust the capacity region.
\end{example}
\begin{figure}[h]
\subfloat[before filter]{\includegraphics[scale=0.28]{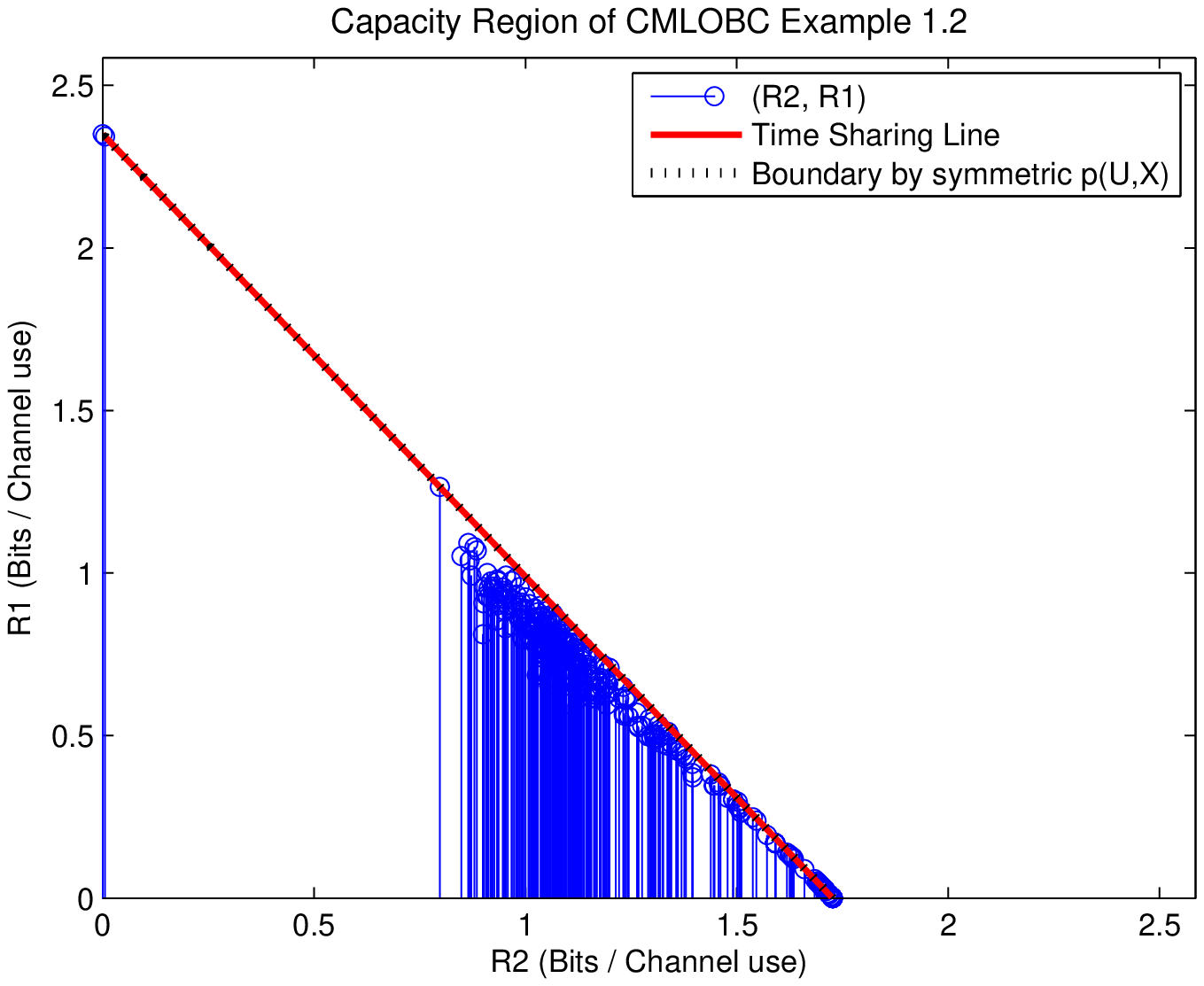}}\subfloat[after filter]{\includegraphics[scale=0.28]{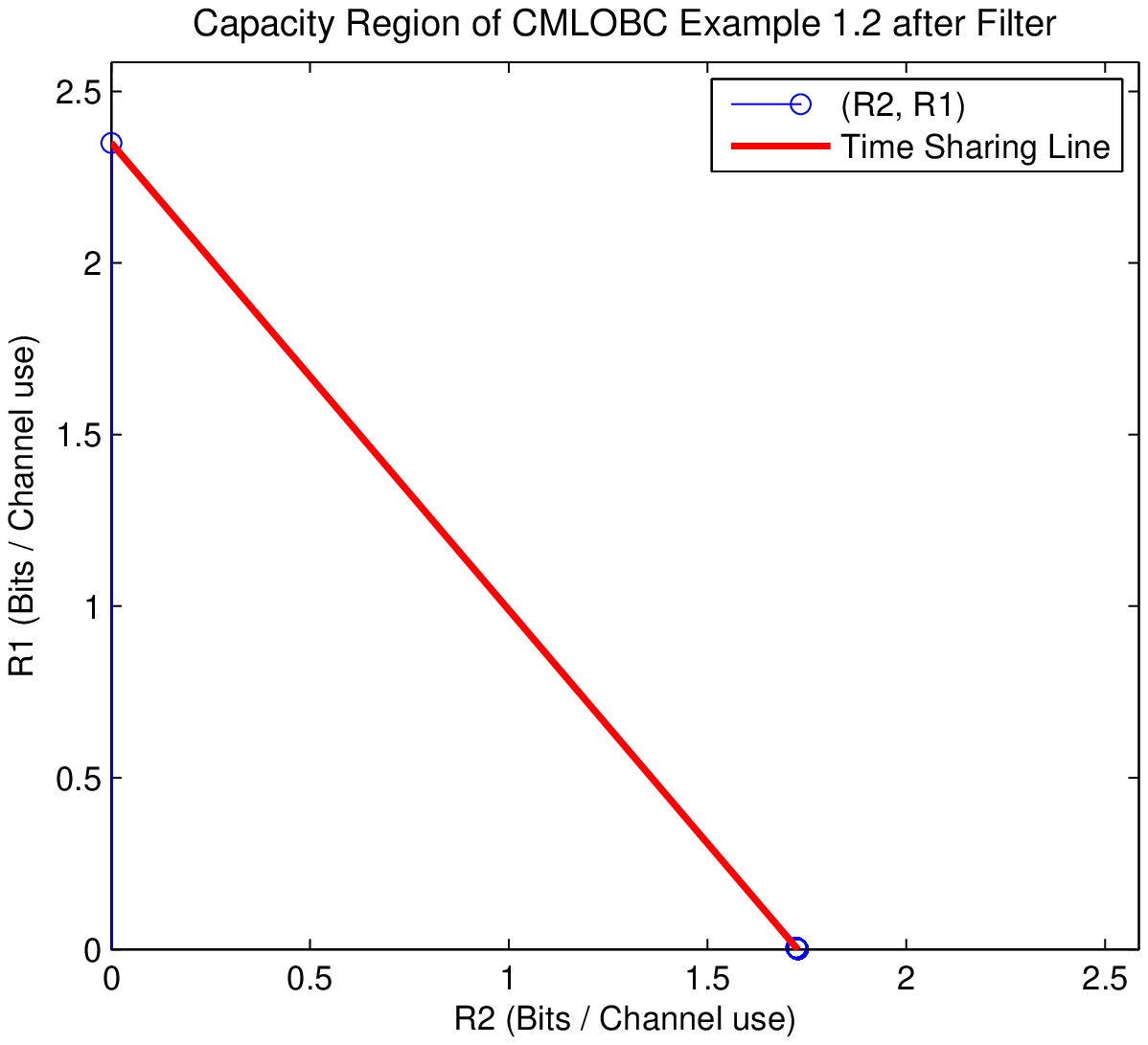}}\caption{Capacity region of Example 2, $\boldsymbol{\epsilon}^{(1)}=(0.05,0.20,0.75)$,
$\boldsymbol{\epsilon}^{(2)}=(0.30,0.15,0.55)$. \label{fig:CREx12}}

\end{figure}

\begin{example}
Let $\boldsymbol{\epsilon}^{(1)}=(\rho_{1}^{2},\rho_{1},1-\rho_{1}-\rho_{1}^{2})$,
$\boldsymbol{\epsilon}^{(2)}=(\rho_{2}^{2},\rho_{2},1-\rho_{2}-\rho_{2}^{2})$,
where $0\leq\rho_{1}\leq\rho_{2}\leq(-1+\sqrt{5})/2$. This corresponds
to Case (ii). Numerical results are shown in Fig.\,\ref{fig:CREx2},
for the particular case $\rho_{1}=0.1$, $\rho_{2}=0.3$ indicating
that time sharing might be sufficient to exhaust the capacity region. 
\end{example}
\begin{figure}[h]
 \subfloat[before filter]{\includegraphics[scale=0.28]{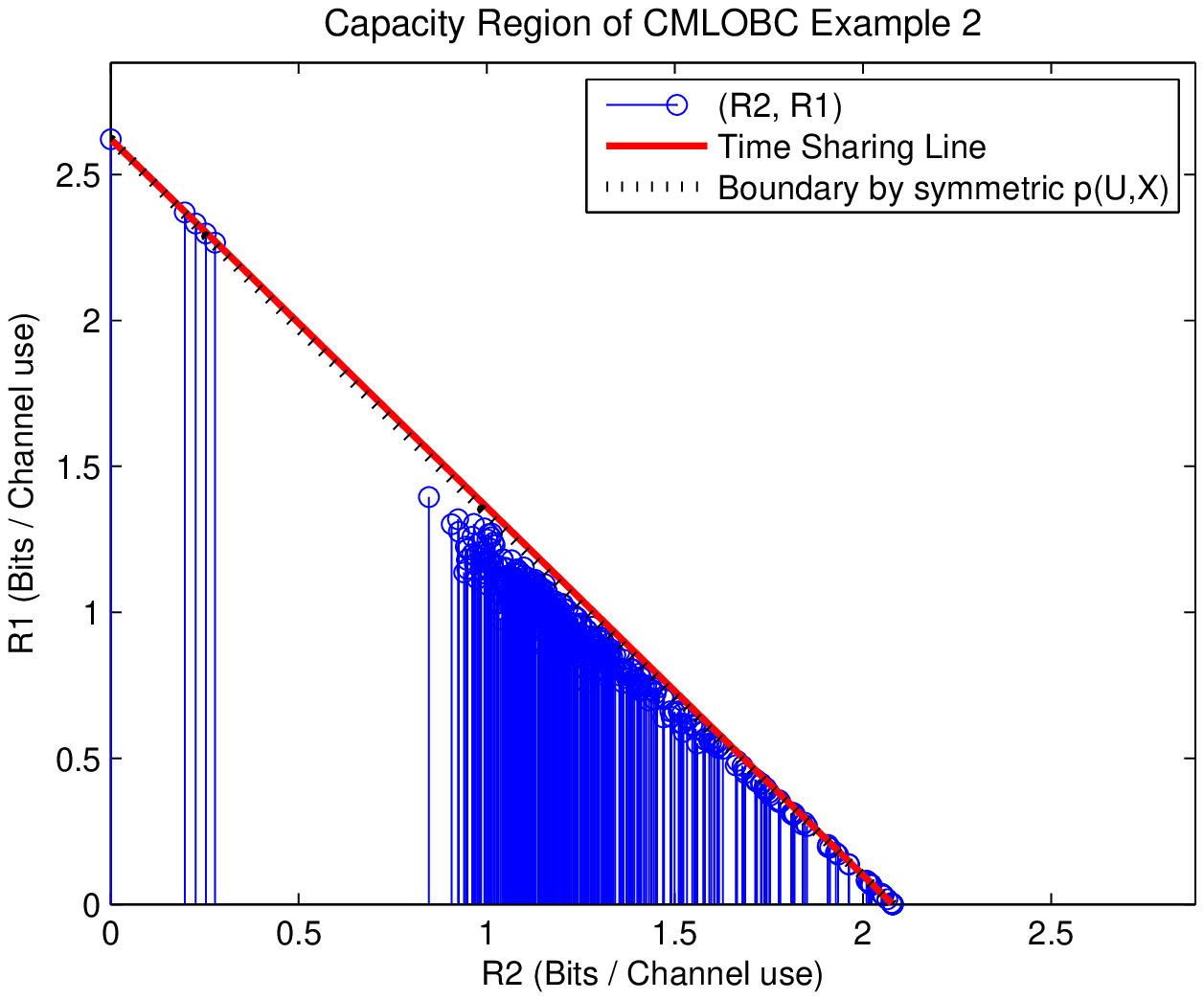}}\subfloat[after filter]{\includegraphics[scale=0.28]{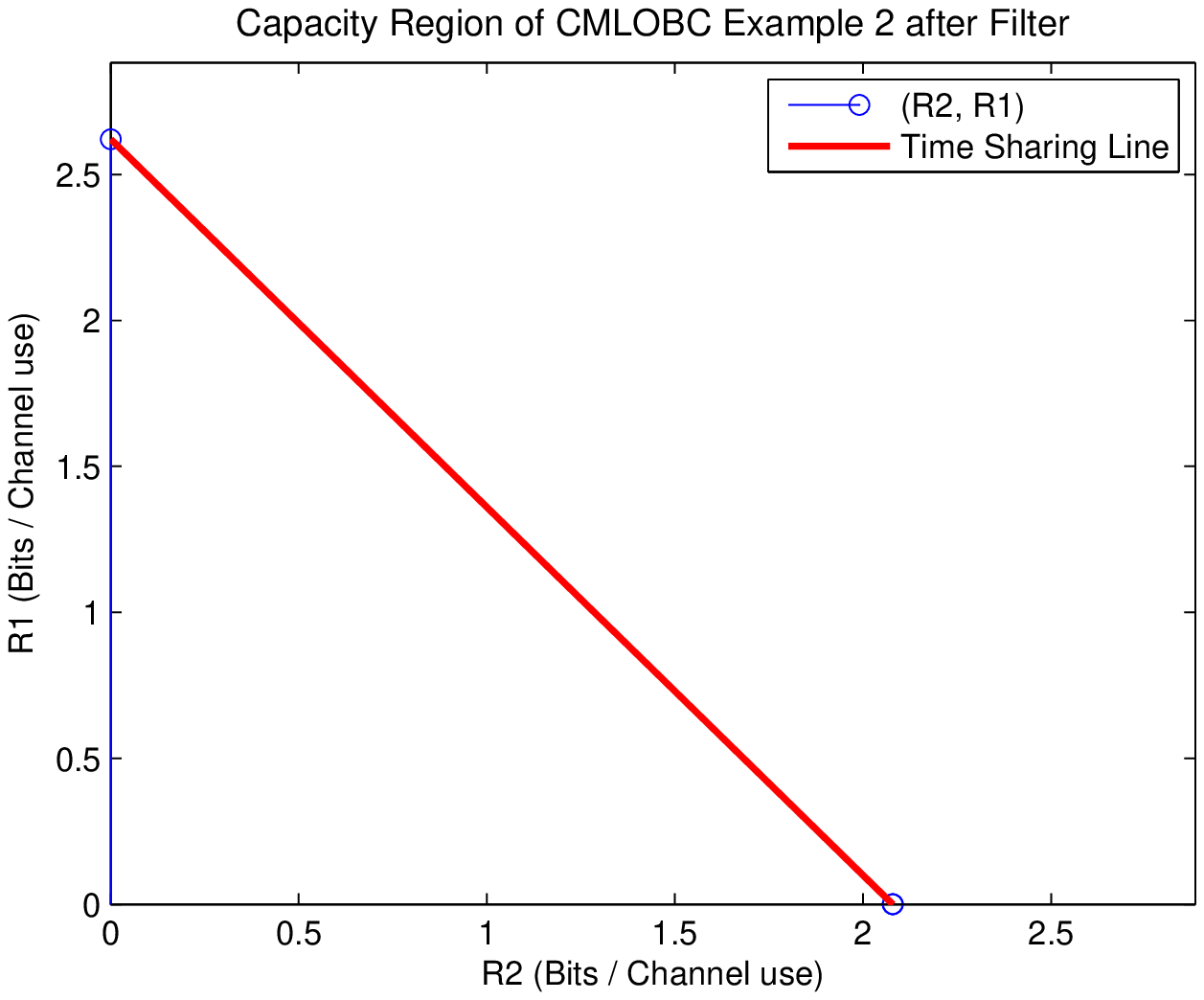}}\caption{Capacity Region of Example 3, $\boldsymbol{\epsilon}^{(1)}=(0.01,0.1,0.89)$,
$\boldsymbol{\epsilon}^{(2)}=(0.09,0.3,0.61)$\label{fig:CREx2}}

\end{figure}

\begin{example}
let $q=2$, $m=3$, $l=2$, and define $\boldsymbol{\epsilon}^{(1)}=(0,\rho_{1},1-\rho_{1})$,
$\boldsymbol{\epsilon}^{(2)}=(0,\rho_{2},1-\rho_{2})$. with $0\leq\rho_{1}\leq\rho_{2}\leq1$,
the condition of case (ii) is satisfied. Numerical results are shown
in Fig.\,\ref{fig:CREx3} for the particular case $\rho_{1}=0.1$,
$\rho_{2}=0.3$ indicating that time sharing is suffice to exhaust
the capacity region.
\end{example}
\begin{figure}[h]
\subfloat[before filter]{\includegraphics[scale=0.28]{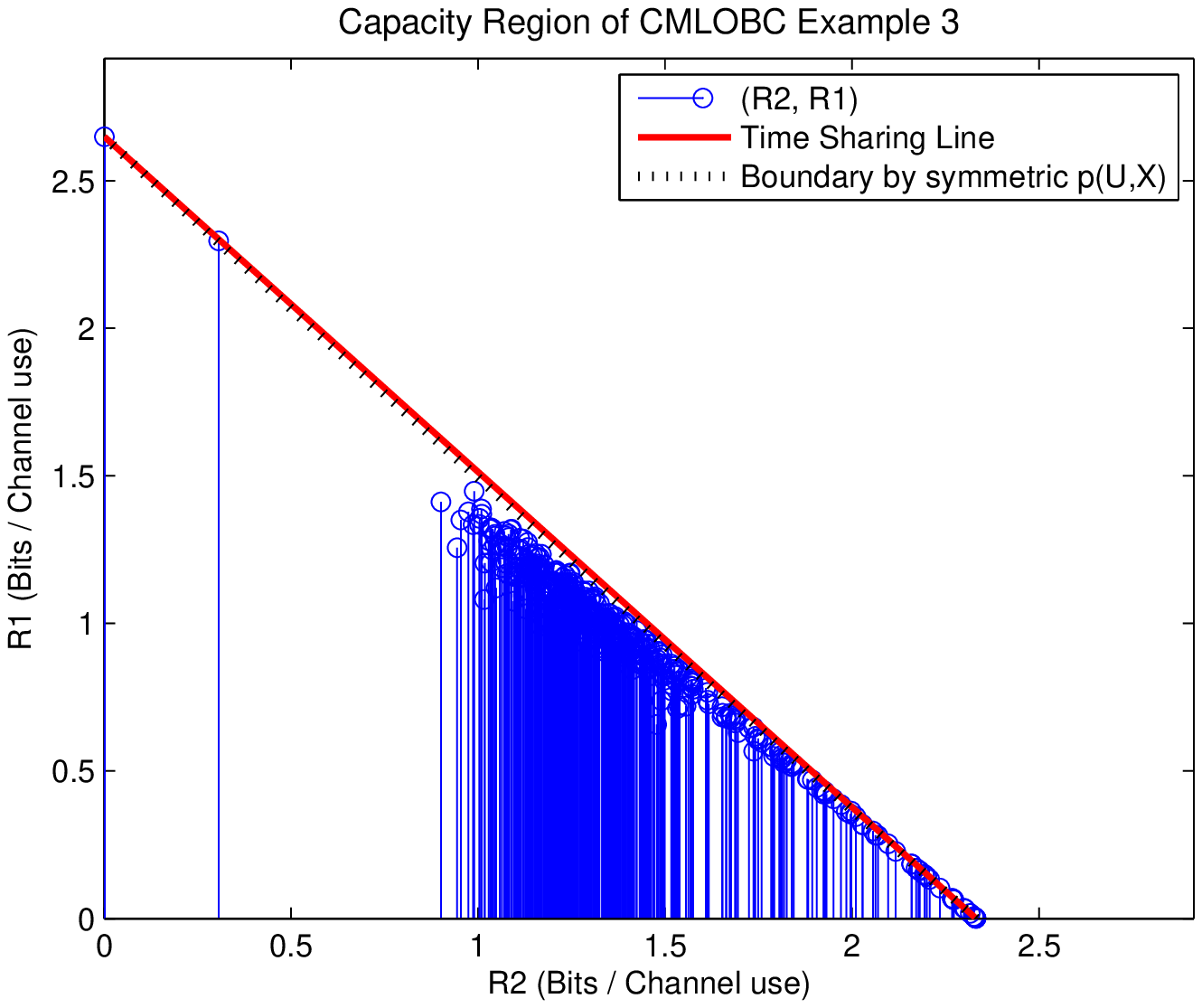}}\subfloat[after filter]{\includegraphics[scale=0.28]{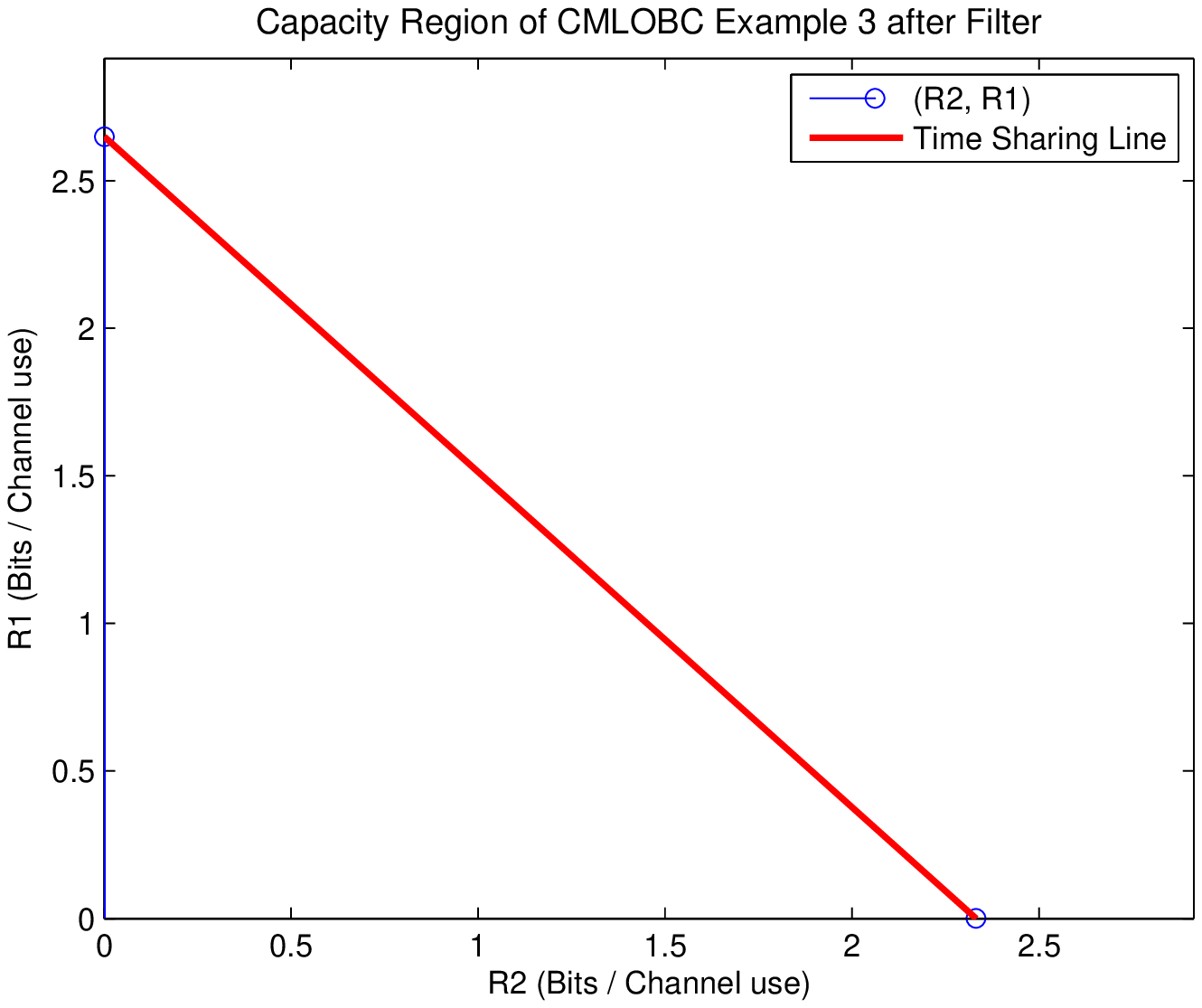}}\caption{Capacity Region of Example 4, $\boldsymbol{\epsilon}^{(1)}=(0,0.1,0.9)$,
$\boldsymbol{\epsilon}^{(2)}=(0,0.3,0.7)$\label{fig:CREx3}}

\end{figure}

\subsection{A Conjecture on the Convexity of Capacity Region }

Overall the analysis supports the conclusion that superposition coding
on CMLOBCs has no benefit over simple time-sharing unless we are in
Case (i). However, proving the conjecture in full generality seems
to be difficult. 
\begin{conjecture*}
For the degraded CMLOBCs described by \eqref{eq:WCMLOBExa}, the capacity
region is strictly concave ($\cap$) if and only if $\epsilon_{1}^{(1)}\epsilon_{2}^{(2)}>\epsilon_{1}^{(2)}\epsilon_{2}^{(1)}$.
\end{conjecture*}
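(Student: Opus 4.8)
The plan is to reduce the determination of the whole capacity region to the one-parameter symmetric family \eqref{eq:pxu1} already analysed in Lemma~\ref{lem:WCMLOBCLem1}, and then to read off strict concavity from the sign of $\epsilon_1^{(1)}\epsilon_2^{(2)}-\epsilon_1^{(2)}\epsilon_2^{(1)}$. By the capacity-region characterization for degraded broadcast channels (\cite[Ch.~14.6]{cover1991elements}, \cite{720547}), which applies here because Th.~\ref{thm:DCMLOBCThm} guarantees degradedness under \eqref{eq:DCMLOBCCondition}, the region is the union over all auxiliary $\mathsf U$ with $\mathsf U\to\mathsf X\to(\mathsf Y_1,\mathsf Y_2)$ of the rectangles $\{0\le R_1\le I(\mathsf X;\mathsf Y_1\mid\mathsf U),\,0\le R_2\le I(\mathsf U;\mathsf Y_2)\}$, with cardinality bound $|\mathfrak U|\le|\mathfrak X|=7$. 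Lemma~\ref{lem:WCMLOBCLem1} supplies only the achievability half, namely that the symmetric family traces a curve $\Gamma$ lying inside the region; the whole argument therefore rests on the matching converse, that no choice of $\mathsf U$ can push the Pareto boundary past the upper concave envelope of $\Gamma$.

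First I would exploit the symmetry of the channel \eqref{eq:WCMLOBExa}. The group $G=\operatorname{GL}(3,2)$ acts transitively on the $7$ planes of $\mathbb F_2^3$ and, compatibly, on the point- and plane-alphabets of $\mathsf Y_1,\mathsf Y_2$, leaving the transition probabilities invariant. Given any $\mathsf U$ achieving a boundary pair, I would form the symmetrized auxiliary $\tilde{\mathsf U}=(\mathsf U,\mathsf G)$ with $\mathsf G$ uniform on $G$ and independent, using the rotated test channels $\tilde p(x\mid u,g)=p(g^{-1}x\mid u)$. Invariance of the conditional entropies leaves $I(\mathsf X;\mathsf Y_1\mid\tilde{\mathsf U})$ equal to $I(\mathsf X;\mathsf Y_1\mid\mathsf U)$, while the averaging makes the input marginal uniform, so that concavity of $H(\mathsf Y_2)$ in the input distribution gives $I(\tilde{\mathsf U};\mathsf Y_2)\ge I(\mathsf U;\mathsf Y_2)$. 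Hence the boundary is attained within $G$-symmetric joint laws, and the optimal input may be taken uniform.

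Next I would turn the boundary into a single-letter concave-envelope problem. Sweeping the supporting slope $\mu\ge0$, the boundary value is $J(\mu)=\max_{p(u,x)}\bigl[I(\mathsf X;\mathsf Y_1\mid\mathsf U)+\mu I(\mathsf U;\mathsf Y_2)\bigr]$; writing $q_u=p(\cdot\mid u)$ for the per-cloud input law and using $I(\mathsf X;\mathsf Y_1\mid\mathsf U)=\sum_u p(u)\,I_{q_u}(\mathsf X;\mathsf Y_1)$ together with $I(\mathsf U;\mathsf Y_2)=H_{p(x)}(\mathsf Y_2)-\sum_u p(u)\,H_{q_u}(\mathsf Y_2)$, the objective becomes the expectation of a per-cloud functional plus the concave term $\mu H_{p(x)}(\mathsf Y_2)$. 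Restricting to $G$-symmetric $q_u$ collapses this functional to a scalar function of the symmetric parameter $\sigma$ of \eqref{eq:pxu1}, whose maximizer is governed by that function's upper concave envelope; the same second-derivative computation underlying Lemma~\ref{lem:WCMLOBCLem1} shows its curvature has the sign of $\epsilon_1^{(1)}\epsilon_2^{(2)}-\epsilon_1^{(2)}\epsilon_2^{(1)}$. In Case~(i) the function is concave, its envelope is itself, and the boundary coincides with the strictly concave $\Gamma$; in Cases~(ii) and~(iii) it is convex or affine, the envelope is the chord joining $(C_1,0)$ and $(0,C_2)$, and the boundary is the time-sharing line, which is not strictly concave. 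Assembling the two cases yields the claimed equivalence.

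The step I expect to be the main obstacle is the converse embodied in this symmetrization-plus-envelope reduction, i.e.\ proving that symmetric clouds drawn from \eqref{eq:pxu1} together with time sharing already exhaust the Pareto boundary. Remark~\ref{rmk:WCMLOBCLem1} flags exactly this gap: Lemma~\ref{lem:WCMLOBCLem1} bounds the region only from inside, and an asymmetric $\mathsf U$ could conceivably beat the time-sharing line in Case~(ii). Making the symmetrization inequalities tight, and in particular ruling out non-symmetric maximizers of the per-cloud functional when the curvature is adverse, is the delicate part; it is precisely what separates a full proof from the numerical evidence of Figs.~\ref{fig:CREx11}--\ref{fig:CREx3}, and the reason the statement is posed here only as a conjecture.
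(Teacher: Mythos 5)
Your target statement is posed in the paper as an open \emph{conjecture}: the paper offers no proof of it, only the achievability-side curvature computation of Lemma~\ref{lem:WCMLOBCLem1} plus numerical evidence, and Remark~\ref{rmk:WCMLOBCLem1} explicitly warns that the family \eqref{eq:pxu1} need not trace the boundary of the capacity region. Your proposal does not close that gap, and indeed your own final paragraph concedes as much, so what you have written is a proof sketch rather than a proof. The sound part is the symmetrization: adjoining an independent uniform $\mathsf G$ over $\operatorname{GL}(3,2)$ with rotated test channels preserves $I(\mathsf X;\mathsf Y_1\mid\mathsf U)$ by channel invariance and, by concavity of $H(\mathsf Y_2)$ in the input law together with transitivity of the group action on $\mathcal P(\mathbb F_2^3,2)$, does not decrease $I(\mathsf U;\mathsf Y_2)$ (the blow-up of $|\mathfrak U|$ is harmless by a Carath\'eodory reduction). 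But this only forces the \emph{marginal} of $\mathsf X$ to be uniform; it does not force the individual clouds $q_u=p(\cdot\mid u)$ into the one-parameter family \eqref{eq:pxu1}. After symmetrization the clouds are arbitrary $G$-translates of the original clouds --- the joint law is $G$-invariant as a family, not cloud by cloud --- so the asserted ``collapse to a scalar function of the symmetric parameter $\sigma$'' is unjustified. This is precisely the missing converse.

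The per-cloud Lagrangian $I_{q}(\mathsf X;\mathsf Y_1)-\mu H_{q}(\mathsf Y_2)$ is a difference of two concave functions of $q$, i.e.\ the non-convex DC objective flagged in \cite{4595282}; its maximizers over the $6$-simplex need not be symmetric and need not lie on the curve \eqref{eq:pxu1}, and nothing in your argument rules out, say, clouds uniform over the three planes through a fixed line outperforming the time-sharing chord in Case~(ii). Even in Case~(i) your conclusion that the boundary coincides with $\Gamma$ presupposes the converse: a concave boundary lying on or above the strictly concave $\Gamma$ with the same endpoints $(0,C_2)$ and $(C_1,0)$ could still contain a linear segment, so strict concavity of the region does not follow from achievability of $\Gamma$ alone. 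Hence neither direction of the claimed equivalence is established. For the comparison requested: the paper deliberately leaves this statement unproven because of exactly this converse step, and your attempt founders on the same step; what you genuinely add beyond the paper is the correct symmetrization argument reducing the search to uniform input marginals, a reasonable first move but far short of the single-letter converse the conjecture requires.
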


\subsection{A special example--The 7-ary erasure broadcast channel}
\begin{example}
Let $\boldsymbol{\epsilon}^{(1)}=(\rho_{1},0,1-\rho_{1})$, $\boldsymbol{\epsilon}^{(2)}=(\rho_{2},0,1-\rho_{2})$,
where $0\leq\rho_{1}\leq\rho_{2}\leq1$. Then the condition of Case
(iii) is satisfied. Since (apart from unused output subspaces) there
is now only one erasure symbol (the output subspace $\{\mathbf{0}\}$),
the subchannels of the CMLOBC become $7$-ary erasure channels. 
\end{example}
The capacity region of this broadcast channel, more generally of any
CMLOBC with $\boldsymbol{\epsilon}^{(i)}=(\rho_{i},0,...,0,1-\rho_{i})$
for $i=1,2$, where $0\leq\rho_{1}\leq\rho_{2}\leq1$, is determined
by the next theorem. For the proof we need the following lemma. 
\begin{lem}
Let $\mathsf{U}$, $\mathsf{X}$ and $\mathsf{Y}$ be random variables
with alphabets $\mathfrak{U}$, $\mathfrak{X}$ and $\mathfrak{Y}$,
respectively, forming a Markov chain $\mathsf{U\rightarrow\mathsf{X}\rightarrow\mathsf{Y}}$.
Suppose that $\mathsf{X}\rightarrow\mathsf{Y}$ is described by

\begin{equation}
\bigl(p(Y_{j}|X_{i})\bigr)=\left(\begin{array}{ccc}
\rho\mathbf{J}_{|\mathfrak{X}|\times1} & | & (1-\rho)\mathbf{I}_{|\mathfrak{X}|\times|\mathfrak{X}|}\end{array}\right).\label{eq:q_ary_Channel_transfer}\end{equation}
 Then we have the relationships \begin{equation}
I(\mathsf{U};\mathsf{Y})=(1-\rho)I(\mathsf{U};\mathsf{X}),\label{eq:q_ary_Relation1}\end{equation}
 \begin{equation}
I(\mathsf{X};\mathsf{Y}|\mathsf{U})=(1-\rho)I(\mathsf{X};\mathsf{X}|U).\label{eq:q_ary_Relation2}\end{equation}
 \label{lem:q_ary_Lem} 
\end{lem}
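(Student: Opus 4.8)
The plan is to exploit the special structure of the erasure-type channel \eqref{eq:q_ary_Channel_transfer}: with probability $\rho$ the output equals a fixed erasure symbol (the column $\rho\mathbf{J}_{|\mathfrak{X}|\times1}$), regardless of the input, and with probability $1-\rho$ the output reproduces the input exactly (the block $(1-\rho)\mathbf{I}$). The cleanest route is to introduce an auxiliary binary \emph{erasure indicator} $\mathsf{E}$, defined by $\mathsf{E}=1$ when $\mathsf{Y}$ is the erasure symbol and $\mathsf{E}=0$ otherwise. From the channel matrix one reads off that $\mathsf{E}$ is a deterministic function of $\mathsf{Y}$, that $\Pr(\mathsf{E}=1)=\rho$, and---crucially---that $\mathsf{E}$ is independent of the pair $(\mathsf{U},\mathsf{X})$; moreover $\mathsf{Y}=\mathsf{X}$ conditioned on $\mathsf{E}=0$, while $\mathsf{Y}$ is constant conditioned on $\mathsf{E}=1$.

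Using that $\mathsf{E}$ is a function of $\mathsf{Y}$ and independent of the inputs, I would expand each mutual information by conditioning on $\mathsf{E}$. For \eqref{eq:q_ary_Relation1}, write $I(\mathsf{U};\mathsf{Y})=I(\mathsf{U};\mathsf{Y},\mathsf{E})=I(\mathsf{U};\mathsf{E})+I(\mathsf{U};\mathsf{Y}\mid\mathsf{E})$; the first term vanishes by independence, and in the second the $\mathsf{E}=1$ branch contributes nothing (constant $\mathsf{Y}$) while the $\mathsf{E}=0$ branch gives $I(\mathsf{U};\mathsf{X}\mid\mathsf{E}=0)=I(\mathsf{U};\mathsf{X})$, weighted by $1-\rho$. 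The identical argument with $\mathsf{U}$ moved into the conditioning yields \eqref{eq:q_ary_Relation2}: $I(\mathsf{X};\mathsf{Y}\mid\mathsf{U})=I(\mathsf{X};\mathsf{Y}\mid\mathsf{U},\mathsf{E})$, whose only surviving branch is $(1-\rho)\,I(\mathsf{X};\mathsf{X}\mid\mathsf{U})=(1-\rho)H(\mathsf{X}\mid\mathsf{U})$.

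As a more elementary alternative I would compute entropies directly. The output distribution places mass $\rho$ on the erasure symbol and $(1-\rho)p_{\mathsf{X}}(x)$ on the symbol corresponding to input $x$, giving $H(\mathsf{Y})=H_{\mathrm{b}}(\rho)+(1-\rho)H(\mathsf{X})$ with $H_{\mathrm{b}}$ the binary entropy; since the channel law is unchanged by conditioning, the same formula holds for each fixed value of $\mathsf{U}$, hence $H(\mathsf{Y}\mid\mathsf{U})=H_{\mathrm{b}}(\rho)+(1-\rho)H(\mathsf{X}\mid\mathsf{U})$. Combined with $H(\mathsf{Y}\mid\mathsf{X})=H(\mathsf{Y}\mid\mathsf{X},\mathsf{U})=H_{\mathrm{b}}(\rho)$, the two claims follow by subtraction, namely $I(\mathsf{U};\mathsf{Y})=H(\mathsf{Y})-H(\mathsf{Y}\mid\mathsf{U})$ and $I(\mathsf{X};\mathsf{Y}\mid\mathsf{U})=H(\mathsf{Y}\mid\mathsf{U})-H(\mathsf{Y}\mid\mathsf{X},\mathsf{U})$, with the $H_{\mathrm{b}}(\rho)$ terms cancelling in each case.

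The only point requiring care---and the main obstacle---is justifying that conditioning on $\mathsf{U}$ does not alter the channel law $p(\mathsf{Y}\mid\mathsf{X})$: this is precisely the Markov-chain hypothesis $\mathsf{U}\rightarrow\mathsf{X}\rightarrow\mathsf{Y}$, which guarantees that $\mathsf{Y}$ is conditionally independent of $\mathsf{U}$ given $\mathsf{X}$, and hence that the indicator $\mathsf{E}$ is independent of $\mathsf{U}$ as well. Once this independence is pinned down, each identity above is a one-line consequence, so the lemma is essentially bookkeeping around the erasure structure rather than a genuinely hard computation.
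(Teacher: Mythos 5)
Your argument is correct and is essentially the paper's own: the paper disposes of the lemma in one line by invoking ``linearity of mutual information with respect to the decomposition \eqref{eq:q_ary_Channel_transfer}'' together with $I(\mathsf{X};\mathsf{Y}|\mathsf{U})=I(\mathsf{X};\mathsf{Y})-I(\mathsf{U};\mathsf{Y})$, and your erasure-indicator $\mathsf{E}$ (independent of $(\mathsf{U},\mathsf{X})$ by the Markov hypothesis, with the $\mathsf{E}=1$ branch contributing nothing) is precisely the rigorous content behind that linearity claim. The only cosmetic difference is that you derive \eqref{eq:q_ary_Relation2} directly by conditioning (or by the entropy bookkeeping with the cancelling $H_{\mathrm{b}}(\rho)$ terms) rather than by subtracting \eqref{eq:q_ary_Relation1} from $I(\mathsf{X};\mathsf{Y})=(1-\rho)H(\mathsf{X})$, which is immaterial.
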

This follows from linearity of mutual information with respect to
the decomposition \eqref{eq:q_ary_Channel_transfer} and $I(\mathsf{X};\mathsf{Y}|\mathsf{U})=I(\mathsf{X};\mathsf{Y})-I(\mathsf{U};\mathsf{Y})$. 
\begin{thm}
Suppose that the two subchannels of a CMLOBC are described by \begin{equation}
\mathbf{S}^{(i)}=\bigl(\rho_{i}\mathbf{J}_{|\mathfrak{X}|\times1}\mid(1-\rho_{i})\mathbf{I}_{|\mathfrak{X}|\times|\mathfrak{X}|}\bigr),\label{eq:q_ary_CMLOBC}\end{equation}
 where $\mathfrak{X}=\mathcal{P}(\mathbb{F}_{q}^{m},l)$ and $0\leq\rho_{1}\leq\rho_{2}\leq1$.
Then its capacity region is the set of all pairs of $(R_{1},R_{2})$
satisfying $R_{1},R_{2}\geq0$ and \begin{equation}
\frac{R_{1}}{(1-\rho_{1})\log|\mathfrak{X}|}+\frac{R_{2}}{(1-\rho_{2})\log|\mathfrak{X}|}\leq1.\label{eq:q_ary_CapacityRegion}\end{equation}
 \label{thm:q_ary_Capacity_Thm}
\end{thm}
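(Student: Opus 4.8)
The plan is to combine the single-letter characterization of the capacity region of a degraded broadcast channel with the information identities of Lemma~\ref{lem:q_ary_Lem}. First I would observe that, since $\boldsymbol{\epsilon}^{(i)}=(\rho_i,0,\dots,0,1-\rho_i)$ and $\rho_1\le\rho_2$, the partial-sum condition \eqref{eq:DCMLOBCCondition} holds: for $0\le i\le l-1$ both sides equal $\rho_1\le\rho_2$, and for $i=l$ both equal $1$. Hence by Th.~\ref{thm:DCMLOBCThm} the channel is degraded with $\mathsf{Y}_2$ a degraded version of $\mathsf{Y}_1$. Consequently \cite[Ch.~14.6]{cover1991elements} the capacity region equals the set of rate pairs satisfying $0\le R_1\le I(\mathsf{X};\mathsf{Y}_1\mid\mathsf{U})$ and $0\le R_2\le I(\mathsf{U};\mathsf{Y}_2)$ for some Markov chain $\mathsf{U}\to\mathsf{X}\to(\mathsf{Y}_1,\mathsf{Y}_2)$, all informations taken to base $q$ so as to match the units in \eqref{eq:LOBCRate}. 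Writing $C_1=(1-\rho_1)\log|\mathfrak{X}|$ and $C_2=(1-\rho_2)\log|\mathfrak{X}|$ for the two point-to-point capacities, I would then prove the two inclusions separately.

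For the converse, fix any admissible $\mathsf{U}$ and apply Lemma~\ref{lem:q_ary_Lem} to each subchannel (with $\rho=\rho_1$ for $\mathsf{Y}_1$ and $\rho=\rho_2$ for $\mathsf{Y}_2$), giving $I(\mathsf{X};\mathsf{Y}_1\mid\mathsf{U})=(1-\rho_1)H(\mathsf{X}\mid\mathsf{U})$ and $I(\mathsf{U};\mathsf{Y}_2)=(1-\rho_2)I(\mathsf{U};\mathsf{X})$. Dividing the bounds on $R_1,R_2$ by $C_1,C_2$ and adding, the factors $(1-\rho_i)$ cancel and the chain rule $H(\mathsf{X}\mid\mathsf{U})+I(\mathsf{U};\mathsf{X})=H(\mathsf{X})$ yields
\[
\frac{R_1}{C_1}+\frac{R_2}{C_2}\le\frac{H(\mathsf{X}\mid\mathsf{U})+I(\mathsf{U};\mathsf{X})}{\log|\mathfrak{X}|}=\frac{H(\mathsf{X})}{\log|\mathfrak{X}|}\le 1,
\]
which is exactly \eqref{eq:q_ary_CapacityRegion}. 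This step uses nothing about $\mathsf{U}$ beyond the Markov structure, so it bounds the whole region at once.

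For achievability I would exhibit the corner points and invoke convexity. Taking $\mathsf{X}$ uniform on $\mathfrak{X}$ (so $H(\mathsf{X})=\log|\mathfrak{X}|$) and $\mathsf{U}$ constant gives $I(\mathsf{U};\mathsf{Y}_2)=0$ and $I(\mathsf{X};\mathsf{Y}_1\mid\mathsf{U})=C_1$, i.e.\ the point $(C_1,0)$; taking instead $\mathsf{U}=\mathsf{X}$ gives $H(\mathsf{X}\mid\mathsf{U})=0$ and $I(\mathsf{U};\mathsf{Y}_2)=C_2$, i.e.\ $(0,C_2)$. The point $(0,0)$ is trivially achievable. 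Since the capacity region of a broadcast channel is convex, the achievability of these three corners forces the entire triangle \eqref{eq:q_ary_CapacityRegion} into the region, matching the converse and completing the proof. (Alternatively, every boundary point $\bigl((1-\rho_1)(\log|\mathfrak{X}|-t),(1-\rho_2)t\bigr)$, $t\in[0,\log|\mathfrak{X}|]$, is realized directly by a symmetric test channel $\mathsf{U}\to\mathsf{X}$ keeping $\mathsf{X}$ uniform, as in \eqref{eq:pxu1}; this is the Case~(iii) incarnation of Lemma~\ref{lem:WCMLOBCLem1}, where $\Gamma$ coincides with the time-sharing line.)

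The only genuinely delicate point is the achievability of the interior of the boundary segment. I expect the cleanest route is the convexity argument just sketched, since it sidesteps verifying a continuous family of test channels; the explicit-construction alternative requires checking that the symmetric channel realizes every intermediate value of $I(\mathsf{U};\mathsf{X})$ in $[0,\log|\mathfrak{X}|]$ while preserving uniformity of $\mathsf{X}$, which is routine but more laborious. Everything else---degradedness, the single-letterization, and the converse---is immediate once Lemma~\ref{lem:q_ary_Lem} is in hand.
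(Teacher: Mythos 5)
Your proof is correct and follows essentially the same route as the paper's: the heart of both arguments is applying Lemma~\ref{lem:q_ary_Lem} to each subchannel and using $I(\mathsf{X};\mathsf{X}|\mathsf{U})+I(\mathsf{U};\mathsf{X})=H(\mathsf{X})\leq\log|\mathfrak{X}|$ to obtain the weighted bound $C_{2}I(\mathsf{X};\mathsf{Y}_{1}|\mathsf{U})+C_{1}I(\mathsf{U};\mathsf{Y}_{2})\leq C_{1}C_{2}$ for every admissible $p(U,X)$. The only difference is that you spell out the degradedness check and the achievability of the triangle (corner points plus convexity), which the paper leaves implicit by appealing to the time-sharing remark in Section~II.
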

\begin{proof} See Appendix \ref{sub:Profq_ary_Capacity_Thm}.\end{proof}

\section{Conclusion}

In this paper, we have set up the framework of linear operator broadcast
channels. We characterized degraded CMLOBCs by a set of inequalities
for their associated probability vectors. Necessary and sufficient
conditions for a CMLOBC being degraded were obtained. The work on
CMLOBCs over $\operatorname{PG}(2,2)$ shows that time sharing schemes
do not always exhaust the capacity region.

We conclude with some open problems arising from our work. 
\begin{itemize}
\item In the case of more general CMLOBCs (i.e. less noisy, more capable),
whose rate region is not exhausted by superposition coding, investigate
whether other coding technologies (dirty paper coding, etc.) are suitable
for approaching the boundary. 
\item How does the rate region of additive LOBCs or even more general LOBCs
look like? The example of the binary symmetric broadcast channel suggests
that in the generic case the nontrivial boundary curve $R_{2}=f(R_{1})$
is given by a strictly concave ($\cap$) function. 
\item Construct good (multishot) superposition subspace codes for degraded
LOBCs in the case, where rate splitting is needed to approach the
boundary of the rate region. 
\end{itemize}

\section{Acknowledgments}

We wish to thank Prof.\ Ning Cai, Xidian University, Xi'an, China
for helpful discussions and valuable suggestions for the proof of
the outer bound of CMLOBCs. We are indebted to Kensuke Yasui, Hitachi
Ltd., Japan for mailing us a Java script implementing the Arimoto-Blahut
type algorithm.

\section{Appendix}

\subsection{Proof of Lemma\,\ref{lem:DCMLOBCLem1}\label{sub:ProfDCMLOBCLem1}}

\begin{proof} Suppose first that (ii) holds. Postmultiplying the
equation $\boldsymbol{\epsilon}^{(1)}\boldsymbol{\Lambda}=\boldsymbol{\epsilon}^{(2)}$
by the matrix \begin{equation}
\mathbf{L}=\left(\begin{array}{cccc}
1 & 0 & \cdots & 0\\
1 & 1 & \cdots & 0\\
\vdots & \vdots & \ddots & 0\\
1 & 1 & \cdots & 1\end{array}\right)\end{equation}
 we obtain $\boldsymbol{\epsilon}^{(1)}\boldsymbol{\Lambda}\mathbf{L}=\boldsymbol{\epsilon}^{(2)}\mathbf{L}$.
The matrix $\boldsymbol{\Delta}=\boldsymbol{\Lambda}\mathbf{L}=(\delta_{ij})$
is lower triangular with entries $\delta_{ij}\leq1$. (This follows
from $\delta_{ij}=\sum_{k=0}^{l}\lambda_{ik}l_{kj}=\sum_{k=j}^{l}\lambda_{ik}\leq\sum_{k=0}^{l}\lambda_{ik}=1$.)
Hence we have \begin{align*}
\sum_{i=j}^{l}\epsilon_{i}^{(2)} & =(\boldsymbol{\epsilon}^{(2)}\mathbf{L})_{j}=(\boldsymbol{\epsilon}^{(1)}\boldsymbol{\Delta})_{j}\\
 & =\sum_{i=j}^{l}\epsilon_{i}^{(1)}\delta_{ij}\leq\sum_{i=j}^{l}\epsilon_{i}^{(1)}\,\,\,\,\,(0\leq j\leq l).\end{align*}
 Then\[
\sum_{i=0}^{j}\epsilon_{i}^{(1)}=1-\sum_{i=j}^{l}\epsilon_{i}^{(1)}\leq1-\sum_{i=j}^{l}\epsilon_{i}^{(2)}=\sum_{i=0}^{j}\epsilon_{i}^{(2)}\]
 which implies (i).

Now suppose that (i) holds. First we consider the special case where
$\boldsymbol{\epsilon}^{(1)}$ and $\boldsymbol{\epsilon}^{(2)}$
are related in the following way: There exist $0\leq i<j\leq l$ and
a real number $0\leq\lambda\leq1$ such that $\epsilon_{i}^{(2)}=\epsilon_{i}^{(1)}+\lambda\epsilon_{j}^{(1)}$,
$\epsilon_{j}^{(2)}=(1-\lambda)\epsilon_{j}^{(1)}$ and $\epsilon_{k}^{(1)}=\epsilon_{k}^{(2)}$
for $k\in\{0,1,\dots,l\}\setminus\{i,j\}$. In this case we have $\boldsymbol{\epsilon}^{(1)}\boldsymbol{\Lambda}=\boldsymbol{\epsilon}^{(2)}$,
where $\boldsymbol{\Lambda}$ differs from the identity matrix only
in the submatrix corresponding to rows and columns No. $i$, $i+1$,
\ldots{}, $j$. The corresponding submatrix of $\boldsymbol{\Lambda}$
is \begin{equation}
\begin{pmatrix}1\\
 & \ddots\\
 &  & 1\\
\lambda &  &  & 1-\lambda\end{pmatrix},\end{equation}
 so that $\boldsymbol{\Lambda}$ is clearly lower triangular and stochastic.
In general, as is easily proved by induction, a new $\boldsymbol{\epsilon}^{(2)}$
can be updated from $\boldsymbol{\epsilon}^{(1)}$ and last $\boldsymbol{\epsilon}^{(2)}$
by a sequence of transformations of the above form (i.e., add $\lambda$
times the $j$-th component to the $i$-th component and subtract
it from the $j$-th component for some $0\leq i<j\leq l$ and $0\leq\lambda\leq1$).
Since the set of lower triangular stochastic matrices is closed under
matrix multiplication, the result follows. \end{proof}

\subsection{Proof of Lemma\,\ref{lem:DCMLOBCLem2}\label{sub:ProfDCMLOBCLem2}}

\begin{proof} Working with the ordinary incidence matrices $\mathbf{D}_{ls}$,
$\mathbf{D}_{st}$, $\mathbf{D}_{lt}$, the $(i,j)$-entry of $\mathbf{D}_{ls}\mathbf{D}_{st}$
is equal to the number of subspaces $V\in\mathcal{P}(\mathbb{F}_{q}^{m},s)$
satisfying $U_{i}\supseteq V\supseteq W_{j}$, where $U_{i}\in\mathcal{P}(\mathbb{F}_{q}^{m},l)$
and $W_{j}\in\mathcal{P}(\mathbb{F}_{q}^{m},t)$ denote the $i$-th
resp.\ $j$-th subspace in the given ordering on $\mathcal{P}(\mathbb{F}_{q}^{m},l)$
resp. $\mathcal{P}(\mathbb{F}_{q}^{m},t)$. Thus \begin{equation}
(\mathbf{D}_{ls}\mathbf{D}_{st})_{ij}=\begin{cases}
\binom{l-t}{s-t}_{q} & \text{if \ensuremath{U_{i}\supseteq W_{j}}},\\
0 & \text{if \ensuremath{U_{i}\nsupseteq W_{j}}}.\end{cases}\end{equation}
 This shows that $\mathbf{D}_{ls}\mathbf{D}_{st}=\binom{l-t}{s-t}_{q}\mathbf{D}_{lt}$
is a scalar multiple of $\mathbf{D}_{lt}$. Obviously we then also
have $\mathbf{S}_{ls}\mathbf{S}_{st}=\lambda\mathbf{S}_{lt}$ for
some scalar $\lambda$. Since $\mathbf{S}_{ls}\mathbf{S}_{st}$ as
well as $\mathbf{S}_{lt}$ are stochastic, we must have $\lambda=1$,
proving the lemma. \end{proof}

\subsection{Proof of Theorem\,\ref{thm:DCMLOBCThm}\label{sub:ProfDCMLOBCThm}}

\begin{proof} Suppose first that Condition~\eqref{eq:DCMLOBCCondition}
is satisfied. In \eqref{eq:ST} we choose $\mathbf{T}_{ij}=\lambda_{ij}\mathbf{S}_{ij}$
with $\lambda_{ij}\in\mathbb{R}$ (where it is understood that $\mathbf{S}_{ij}=\mathbf{0}$
whenever $i<j$). Using Lemma~\ref{lem:DCMLOBCLem2} we obtain \begin{align}
\mathbf{S}^{(1)}\mathbf{T} & =\bigl(\epsilon_{0}^{(1)}\mathbf{S}_{l0}\mid\epsilon_{1}^{(1)}\mathbf{S}_{l1}\mid\dots\mid\epsilon_{l}^{(1)}\mathbf{S}_{ll}\bigr)\times\nonumber \\
 & \begin{pmatrix}\lambda_{00}\mathbf{S}_{00} & 0 & 0 & 0\\
\lambda_{10}\mathbf{S}_{10} & \lambda_{11}\mathbf{S}_{11} & \cdots & 0\\
\vdots & \vdots & \ddots & \vdots\\
\lambda_{l0}\mathbf{S}_{l0} & \lambda_{l1}\mathbf{S}_{l1} & \cdots & \lambda_{ll}\mathbf{S}_{ll}\end{pmatrix}\\
 & =\bigl((\epsilon_{0}^{(1)}\lambda_{00}+\epsilon_{1}^{(1)}\lambda_{10}+\dots+\epsilon_{l}^{(1)}\lambda_{l0})\mathbf{S}_{l0},\nonumber \\
 & \,\,\,\,\,\,\,\,(\epsilon_{1}^{(1)}\lambda_{11}+\epsilon_{1}^{(1)}\lambda_{21}\dots+\epsilon_{l}^{(1)}\lambda_{l1})\mathbf{S}_{l1},\dots,\epsilon_{l}^{(1)}\lambda_{ll}\mathbf{S}_{ll}\bigr)\end{align}
 By Lemma~\ref{lem:DCMLOBCLem1} we can further choose $\boldsymbol{\Lambda}=(\lambda_{ij})$
as a lower triangular stochastic matrix satisfying $\boldsymbol{\epsilon}^{(1)}\boldsymbol{\Lambda}=\boldsymbol{\epsilon}^{(2)}$.
Then the resulting matrix $\mathbf{T}=(\lambda_{ij}\mathbf{S}_{ij})$
is stochastic and satisfies \eqref{eq:ST}. Hence in this case the
broadcast channel is degraded.

Conversely suppose the broadcast channel is degraded, so that \eqref{eq:ST}
holds for some stochastic (block) matrix $\mathbf{T}=(\mathbf{T}_{ij})$.
First we will show that we can assume (without loss of generality)
that $\mathbf{T}_{ij}=0$ for $i<j$. \eqref{eq:ST} says \[
\sum_{i=0}^{l}\epsilon_{i}^{(1)}(\mathbf{S}_{li}\mathbf{T}_{ij})=\epsilon_{j}^{(2)}\mathbf{S}_{lj}\quad\text{for \ensuremath{0\leq j\leq l}}\]
 If $\epsilon_{i}^{(1)}=0$ then we can replace each block $\mathbf{T}_{ij}$,
$0\leq j\leq l$, by the corresponding all-zero matrix. Hence the
assertion is true in this case. On the other hand, if $\epsilon_{i}^{(1)}>0$
then every positive entry in $\mathbf{S}_{li}\mathbf{T}_{ij}$ forces
a positive entry of $\mathbf{S}_{lj}$ in the same position. Now suppose
$\mathbf{T}_{ij}$ has a nonzero (i.e.\ positive) entry in a position
indexed by some subspaces $V\in\mathcal{P}(\mathbb{F}_{q}^{m},i)$,
$W\in\mathcal{P}(\mathbb{F}_{q}^{m},j)$. Then $\mathbf{S}_{li}\mathbf{T}_{ij}$
has a positive entry in each position indexed by the same subspace
$W$ (as a column index) and any subspace $U\in\mathcal{P}(\mathbb{F}_{q}^{m},l)$
which contains $V$ (as a row index).

If $i<j$ then we can find a subspace $U\in\mathcal{P}(\mathbb{F}_{q}^{m},l)$
which contains $V$ but not $W$. This can be seen as follows: The
space $\overline{W}=(W+V)/V$ is a nonzero subspace of $\mathbb{F}_{q}^{m}/V$.
Hence there exists a subspace $\overline{U}$ of $\mathbb{F}_{q}^{m}/V$
of dimension $l-i<m-i$ which does not contain $\overline{W}$. Then
the preimage $U$ of $\overline{U}$ in $\mathbb{F}_{q}^{m}$ has
the required property. (The assumption $l<m$ is essential here!)

Since $U$ contains $V$ but not $W$, the matrix $\mathbf{S}_{li}\mathbf{T}_{ij}$
has an entry $>0$ in the position corresponding to $(U,W)$ and $\mathbf{S}_{lj}$
has a zero in this position. This contradiction shows that $\epsilon_{i}^{(1)}>0$
implies $\mathbf{T}_{ij}=0$ for $i<j$, so that from now on we can
indeed assume $\mathbf{T}_{ij}=0$ for all $i<j$.

Now we postmultiply \eqref{eq:ST} by \begin{equation}
\mathbf{L}=\begin{pmatrix}\mathbf{S}_{00} & 0 & \cdots & 0\\
\mathbf{S}_{10} & \mathbf{S}_{11} & \cdots & 0\\
\vdots & \vdots & \ddots & \vdots\\
\mathbf{S}_{l0} & \mathbf{S}_{l1} & \cdots & \mathbf{S}_{ll}\end{pmatrix}.\end{equation}
 Using Lemma~\ref{lem:DCMLOBCLem2} on the left-hand side and setting
$\boldsymbol{\Delta}=\mathbf{TL}=(\boldsymbol{\Delta}_{ij})$ on the
right-hand side we obtain \begin{equation}
\sum_{i=j}^{l}\epsilon_{i}^{(1)}\mathbf{S}_{li}\boldsymbol{\Delta}_{ij}=\left(\sum_{i=j}^{l}\epsilon_{i}^{(2)}\right)\mathbf{S}_{lj}\quad(0\leq j\leq l).\end{equation}
 Applying these matrix equations to the all-one column vectors $\mathbf{j}$
of the appropriate dimensions gives, in view of $\mathbf{S}_{lj}\mathbf{j}=\mathbf{j}$
and $(\mathbf{S}_{li}\boldsymbol{\Delta}_{ij})\mathbf{j}=\mathbf{S}_{li}(\boldsymbol{\Delta}_{ij}\mathbf{j})\leq\mathbf{S}_{li}\mathbf{j}=\mathbf{j}$,
the required inequalities $\sum_{i=j}^{l}\epsilon_{i}^{(2)}\leq\sum_{i=j}^{l}\epsilon_{i}^{(1)}$
($0\leq j\leq l$), which completes the proof of the theorem. \end{proof}

\subsection{Proof of Lemma\,\ref{lem:WCMLOBCLem1}\label{sub:ProfWCMLOBCLem1}}

\begin{proof} During the proof we write $\mathsf{Y}^{(i)}$, $i=1,2$,
for the subchannel outputs (here $\mathsf{Y}^{(i)}$ corresponds to
the probability vector $\epsilon^{(i)}$) and $\mathsf{Y}_{s}$, $s=0,1,2$,
for the dimension $s$ component of $\mathsf{Y}^{(i)}$ (corresponding
to the $s$-th block in the decomposition~\eqref{eq:WCMLOBExa}),
which is independent of $i$. We will use the (easily established)
fact that mutual information is linear in the following sense: \begin{align*}
I(\mathsf{X};\mathsf{Y}^{(1)}|\mathsf{U}) & =\sum_{s=0}^{2}\epsilon_{s}^{(1)}I(\mathsf{X};\mathsf{Y}_{s}|\mathsf{U}),\\
I(\mathsf{U};\mathsf{Y}^{(2)}) & =\sum_{s=0}^{2}\epsilon_{s}^{(2)}I(\mathsf{U};\mathsf{Y}_{s}),\end{align*}
 which generalizes to arbitrary decompositions of the form \eqref{eq:partition}.

Clearly $I(\mathsf{X};\mathsf{Y}_{0})=I(\mathsf{U};\mathsf{Y}_{0})=0$.
The (symmetric) channels $\mathsf{X}\to\mathsf{Y}_{1}$, $\mathsf{X}\to\mathsf{Y}_{2}$,
$\mathsf{U}\to\mathsf{Y}_{2}$ have channel matrices $\mathbf{S}_{21}$,
$\mathbf{I}_{7\times7}$, $\tfrac{\sigma}{6}\mathbf{J}_{7\times7}+\left(1-\tfrac{7\sigma}{6}\right)\mathbf{I}_{7\times7}$,
respectively. The channel $\mathsf{U}\to\mathsf{Y}_{1}$ has channel
matrix \[
\left(\tfrac{\sigma}{6}\mathbf{J}_{7\times7}+\left(1-\tfrac{7\sigma}{6}\right)\mathbf{I}_{7\times7}\right)\mathbf{S}_{21}=\tfrac{\sigma}{6}\mathbf{J}_{7\times7}+\left(1-\tfrac{7\sigma}{6}\right)\mathbf{S}_{21}\]
 The input distribution on $\mathfrak{U}$ (and hence the distribution
on $\mathfrak{X}$ as well) is uniform, this gives \begin{align*}
R_{2}(\sigma) & =I(\mathsf{U};\mathsf{Y}^{(2)})\\
 & =\epsilon_{1}^{(2)}\left(-H\left(\tfrac{2\sigma}{3}\right)+\log\tfrac{7}{3}-\tfrac{2\sigma}{3}\log\tfrac{4}{3}\right)\\
 & \quad+\epsilon_{2}^{(2)}\left(-H(\sigma)+\log7-\sigma\log6\right),\\
R_{1}(\sigma) & =I(\mathsf{X};\mathsf{Y}^{(1)}|\mathsf{U})\\
 & =\epsilon_{1}^{(1)}\left(H\left(\tfrac{2\sigma}{3}\right)+\tfrac{2\sigma}{3}\log\tfrac{4}{3}\right)\\
 & \quad+\epsilon_{2}^{(1)}\left(H(\sigma)+\sigma\log6\right),\end{align*}
 where $H(x)=-x\log x-(1-x)\log(1-x)$ denotes the binary entropy
function. To simplify the expressions below, we will take $\log$
as the natural logarithm, for which $H'(x)=\log\frac{1-x}{x}$, $H''(x)=-\frac{1}{x(1-x)}$.
We have further \begin{align*}
R_{2}'(\sigma) & =\epsilon_{1}^{(2)}\left(-\tfrac{2}{3}\log\tfrac{1-2\sigma/3}{2\sigma/3}-\tfrac{2}{3}\log\tfrac{4}{3}\right)\\
 & \quad+\epsilon_{2}^{(2)}\left(-\log\frac{1-\sigma}{\sigma}-\log6\right)\\
 & =-\epsilon_{1}^{(2)}\tfrac{2}{3}\log\tfrac{2(1-2\sigma/3)}{\sigma}-\epsilon_{2}^{(2)}\log\tfrac{6(1-\sigma)}{\sigma},\\
R_{1}'(\sigma) & =\epsilon_{1}^{(1)}\tfrac{2}{3}\log\tfrac{2(1-2\sigma/3)}{\sigma}+\epsilon_{2}^{(1)}\log\tfrac{6(1-\sigma)}{\sigma}.\end{align*}
 From this one verifies at once that $R_{1}'(\sigma)>0$, $R_{2}'(\sigma)<0$
for $0<\sigma<\frac{6}{7}$ (and $R_{1}'(\frac{6}{7})=R_{2}'(\frac{6}{7})=0$).
Hence, by results from standard calculus, $f$ is well-defined and
$f'\bigl(R_{1}(\sigma)\bigr)=\frac{R_{2}'(\sigma)}{R_{1}'(\sigma)}<0$,
so that $f$ is strictly decreasing. Moreover, since $R_{1}(0)=0$,
$R_{2}(0)=\epsilon_{1}^{(2)}\log\tfrac{7}{3}+\epsilon_{2}^{(2)}\log7=C_{2}$,
$R_{1}(\tfrac{6}{7})=\epsilon_{1}^{(1)}\log\tfrac{7}{3}+\epsilon_{2}^{(1)}\log7=C_{1}$,
$R_{2}(\tfrac{6}{7})=0$, we have $f\colon[0,C_{1}]\to[0,C_{2}]$,
$f(0)=C_{2}$, and $f(C_{1})=0$.

In order to decide whether $f$ is convex/concave/linear, we use the
second derivative test from standard calculus. We have to determine
the sign of \[
f''\bigl(R_{1}(\sigma)\bigr)=\frac{R_{2}''(\sigma)R_{1}'(\sigma)-R_{1}''(\sigma)R_{2}'(\sigma)}{R_{1}'(\sigma)^{3}}\]
 for $\sigma\in(0,\frac{6}{7})$, which is the same as the sign of
\begin{align}
 & R_{2}''(\sigma)R_{1}'(\sigma)-R_{1}''(\sigma)R_{2}'(\sigma)=\nonumber \\
 & =\frac{2(\epsilon_{1}^{(2)}\epsilon_{2}^{(1)}-\epsilon_{1}^{(1)}\epsilon_{2}^{(2)})}{\sigma(1-\sigma)(3-2\sigma)}\left((1-\sigma)\log\tfrac{6(1-\sigma)}{\sigma}\right.\nonumber \\
 & \left.-\left(1-\tfrac{2\sigma}{3}\right)\log\tfrac{2(1-\tfrac{2\sigma}{3})}{\sigma}\right).\end{align}
It may be verified that the right-hand factor \begin{align*}
g(\sigma) & =(1-\sigma)\log(1-\sigma)-\left(1-\tfrac{2\sigma}{3}\right)\log\left(1-\tfrac{2\sigma}{3}\right)\\
 & \quad+\tfrac{1}{3}\sigma\log\sigma+(1-\sigma)\log6-\left(1-\tfrac{2\sigma}{3}\right)\log2\end{align*}
 satisfies $g(0)=g(\frac{6}{7})=0$ and \[
g''(\sigma)=-\frac{1}{\sigma(1-\sigma)(3-2\sigma)}<0\quad\text{for \ensuremath{0<\sigma<\tfrac{6}{7}}},\]
 from which it follows that $g(\sigma)$ is positive in $(0,\frac{6}{7})$.
Hence the sign of $f''$ in $(0,\frac{6}{7})$ is constant and equal
to that of $\epsilon_{1}^{(2)}\epsilon_{2}^{(1)}-\epsilon_{1}^{(1)}\epsilon_{2}^{(2)}$.
This concludes the proof.\end{proof}

\subsection{Proof of Theorem\,\ref{thm:q_ary_Capacity_Thm}\label{sub:Profq_ary_Capacity_Thm}}

\begin{proof} It is clear from Lemma~\ref{lem:q_ary_Lem} that the
capacities of the subchannels are $C_{i}=(1-\rho_{i})\log|\mathfrak{X}|$
($i=1,2$). Further, for an arbitrary joint distribution $p(U,X)$
Lemma~\ref{lem:q_ary_Lem} gives \begin{align*}
C_{2}I(\mathsf{X}; & \mathsf{Y}_{1}|\mathsf{U})+C_{1}I(\mathsf{U};\mathsf{Y}_{2})\\
 & =(1-\rho_{2})\log|\mathfrak{X}|(1-\rho_{1})I(\mathsf{X};\mathsf{X}|\mathsf{U})\\
 & \quad+(1-\rho_{1})\log|\mathfrak{X}|(1-\rho_{2})I(\mathsf{U};\mathsf{X})\\
 & =\frac{C_{1}C_{2}}{\log|\mathfrak{X}|}\cdot I(\mathsf{X};\mathsf{X})\\
 & =\frac{C_{1}C_{2}}{\log|\mathfrak{X}|}\cdot H(\mathsf{X})\leq C_{1}C_{2},\end{align*}
which implies \eqref{eq:q_ary_CapacityRegion}.\end{proof}

\bibliographystyle{IEEEtran}
\bibliography{mathe,MyRef,StdRef,strings}
 
\end{document}